 \newtheorem{theorem}{Theorem}
  \newtheorem{remark}{Remark}  
    \newtheorem{definition}{Definition}  
    \newtheorem{corollary}{Corollary}
\begin{document}

\title{A Chance-Constrained Stochastic Electricity Market}

\author{Yury Dvorkin, \textit{IEEE, Member}\thanks{Yury Dvorkin is with the Department of Electrical and Computer
Engineering, Tandon School of Engineering, New York University, New York,
NY 11201 USA (e-mail: dvorkin@nyu.edu). This work was supported by the U.S. National Science Foundation under
Grants \# ECCS-1760540, CMMI-1825212, ECCS-1847285 and by the Alfred P. Sloan Foundation under Grant \#  G-2019-12363.}}

\maketitle

\begin{abstract}
    Efficiently accommodating uncertain renewable resources in wholesale electricity markets is among the foremost priorities of market regulators in the US, UK and EU nations. However, existing deterministic market designs fail to internalize the uncertainty and  their scenario-based stochastic extensions are limited in their ability to simultaneously maximize social welfare and guarantee non-confiscatory market outcomes in expectation and per each scenario. This paper proposes a chance-constrained stochastic market design, which is capable of producing a robust competitive equilibrium and internalizing uncertainty of the renewable  resources in the price formation process. The equilibrium and resulting prices are obtained for different uncertainty assumptions, which requires using either linear (restrictive assumptions) or second-order conic (more general assumptions) duality in the price formation process. The usefulness of the proposed stochastic market design is demonstrated via the case study carried out on the 8-zone ISO New England testbed.

\end{abstract}

\IEEEpeerreviewmaketitle

\section{Introduction}

Following the restructuring of the power  sector in the US and many European nations, wholesale electricity markets have become instrumental for unleashing  competitive forces that, at least theoretically, should encourage  efficiency improvements among electricity suppliers and eventually reduce the cost of electricity for consumers. For example, PJM  reports that their market has annually saved up to \$2.3 billion and reduced wholesale electricity prices by 40\% in 2008-2017 \cite{pjm_value_of_the_market_2018}. 
However, there is a growing concern that the ability of existing electricity market designs 
to continue delivering these benefits will drastically diminish, as the current trend to massively deploy large-scale renewable resources continues. This concern is mainly attributed to the uncertainty and limited controllability of  renewable resources, as well as their zero or near-zero production costs, which tend to distort market outcomes by dispatching thermal generators in an out-of-merit order  \cite{CONEJO2018520, Bose2019}. \textcolor{black}{To account for the effect of constantly increasing uncertainty on market outcomes, Morales et al. \cite{6656976} redefine the merit order to include the expected cost of uncertainty, in addition to the original marginal cost of production, associated with increased reserve needs due to the presence of renewable resources.} Therefore, consistent with the motivation in \cite{6656976,CONEJO2018520, Bose2019}, the US Department of Energy  emphasizes the need for  `\textit{market structures such as ancillary services, balancing markets and energy markets [that] maintain their competitive frameworks [...] as resource mixes change to assure that the market rules are providing the appropriate signals to bring forth both long-term and short-term electricity supplies}' \cite{osti_1251315}.

This need has paved the way for new market mechanisms, commonly referred to as stochastic market designs, that are capable of holistically modeling probabilistic characteristics of renewable resources, e.g., by means of scenario-based stochastic programming. To a large extent, these mechanisms are enabled by the seminal work of Papavasiliou and Oren \cite{5739566}, which demonstrated the economic savings of scenario-based stochastic programming attained from reducing overly conservative  deterministic reserve margins \cite{5739566}.  Thus, instead of using conservative, exogenously set margins (e.g., $(3+5)$-rule as in \cite{5739566}), uncertain and variable outputs of renewable resources can be represented via a finite  set of  scenarios and their
corresponding probabilities. This leads to a lower expected and ex-post operating cost and, under the assumption of inflexible demand, maximizes the social welfare. While this welfare-maximization is a desired property of any market design, existing stochastic market designs struggle to achieve it simultaneously with two other desired properties -- revenue adequacy, i.e., the payments collected by the market from consumers are greater or equal to the payments made by the market to  producers, and cost recovery, i.e., the payment to each producer is greater or equal to its operating cost\footnote{Current deterministic US markets also use out-of-market corrections and uplift payments to retain market participants.}.  Furthermore, in the specific case of scenario-based stochastic programming, achieving revenue adequacy and cost recovery is difficult since it must be done for both the expected case and each scenario individually. For instance, Pritchard et al. \cite{doi:10.1287/opre.1090.0800}, Morales et al. \cite{6146389} and Wong et al. \cite{4162624} demonstrated that  revenue adequacy and cost recovery are satisfied in expectation, but do not necessarily hold for individual scenarios. Kazempour et al. \cite{8246578} and Ruiz et al. \cite{6153412} simulated a stochastic electricity market using stochastic equilibrium problems to simultaneously ensure cost recovery and revenue adequacy per scenario and in expectation. However, the  market designs in \cite{6153412,8246578}  do not guarantee social welfare maximization and, therefore, are intended by the authors for market analyses rather than market-clearing tools. As discussed in \cite{8246578},  a lack of cost recovery and revenue adequacy guarantees inhibits  implementing scenario-based stochastic market designs in practice. As a result, rare real-world  stochastic electricity markets are limited to exogenous  sizing of probabilistic security margins (reserves) within otherwise deterministic market-clearing routines, e.g., in  Swissgrid \cite{7081775}.

Realizing the shortcomings of scenario-based stochastic programming described above, this paper offers a different perspective on a stochastic market design. 
Instead of modeling uncertain outputs of renewable resources by means of a set of scenarios as in \cite{doi:10.1287/opre.1090.0800,6146389,4162624, MORALES2014765, 8246578, 6153412, 7081775}, one can exploit a chance-constrained approach to internalize the stochasticity of renewable resources in market-clearing tools using  statistical moments of the underlying uncertainty (e.g., mean and standard deviation). This approach leads to  chance (probabilistic) constraints that, in turn, can be exactly reformulated into convex,
deterministic expressions and solved efficiently at scale \cite{7332992, doi:10.1137/130910312}. Furthermore,  these chance constraints  offer a high degree of modeling fidelity to control uncertainty assumptions (e.g., probability distributions \cite{6652224, 7268773}) and risk tolerance (e.g., the likelihood of constraint violations \cite{6652224, doi:10.1137/130910312, 7332992, 8600344}). Replacing a set of scenarios with its statistical moments  using chance constraints not only offers a more accurate representation of uncertainty in market-clearing and dispatch tools, see comparison in \cite{6652224, 7268773, doi:10.1137/130910312}, but also eliminates the need to trade-off between expected and per scenario performance, while immunizing the resulting market outcomes against uncertainty. That is, a stochastic  solution is obtained at the expense of solving a deterministic optimization problem, which internalizes statistical moments and risk parameters in the price formation process.

 This paper  proposes an alternative stochastic market design that uses chance constraints  to accurately  model uncertainty of renewable resources. Relative to scenarios, which are often difficult to obtain, the chance constraints can be formulated using  statistical moments of uncertain quantities, which are readily available from historical observations\footnote{\textcolor{black}{Such historical observations can either be collected by the market operator from their day-to-day operations, or obtained from public repositories supported by the US National Aeronautics and Space Administration and  National Oceanic and Atmospheric Administration \cite{7268773}, or licensed/purchased from third-party data providers \cite{open_power}.}} \cite{7268773},  and internalize this uncertainty in market-clearing tools. To this end, we formulate a two-stage chance-constrained unit commitment (CCUC) problem that follows pool-market assumptions typical for US wholesale electricity markets.  Within this CCUC problem, we consider three different  assumptions on underlying uncertainty. First, we assume that the uncertainty is represented by a normal distribution. In this case, the CCUC problem is reduced to a mixed-integer linear  program (MILP) that can be used for pricing electricity similarly to the current US practice (e.g., as in \cite{ONEILL2005269,8329412}). Second, to better accommodate realistic uncertainty, which is often not normally distributed \cite{6942382}, and quadratic production costs of thermal generators, we formulate distributionally robust chance constraints and approximate them using the Chebyshev approximation, leading to a mixed-integer second-order conic (MISOC) program. Third, since the Chebyshev approximation is notoriously conservative, we invoke an exact second-order conic  (SOC) reformulation of distributionally robust chance constraints from \cite{7973099}, which also renders a MISOC program. \textcolor{black}{Note that the second and third assumptions cannot be accommodated for electricity pricing by means of linear duality theory, as in \cite{ONEILL2005269,8329412}, and thus the resulting MISOC programs require using more general SOC duality.} This paper proves that the MISOC equivalents of the CCUC problem yield a robust competitive equilibrium and analyzes electricity prices obtained by means of SOC duality. In addition to its superior computational performance relative to scenario-based stochastic programming \cite{doi:10.1137/130910312}, using the CCUC for electricity pricing is advantageous in several aspects. First, the market-clearing procedure does not rely on scenarios and produces a single set of market decisions. Hence,  market participants, who are currently   distrustful of a scenario-based stochastic market with scenario parameters they do not control \cite{8246578}, will not be exposed to risk of losses. Second, the prices obtained from the proposed market internalize  uncertainty and risk parameters in the price formation process, without trading off between expected and per scenario performance. As a result, the proposed chance-constrained market design enables  real-world implementations of stochastic electricity markets.

\section{Stochastic Market via Chance Constraints}

Following the current US practice, we formulate a two-stage CCUC problem that optimizes the power production for a single time instance in the future, where  the only source of uncertainty stems from  wind power generation:
\allowdisplaybreaks
\begin{subequations}
\label{eq:ccuc}
\begin{flalign}
    \min_{p_i (\bm{\omega}), p_i, \alpha_i, u_i}	&\mathbb{E}_{\bm{\omega}} \sum_{i \in \mathcal{I}} \bigg[C_{0,i} u_{i} + C_{1,i} p_i (\bm\omega) +   C^{}_{2,i} p_i^2 (\bm\omega) \bigg] \label{eq:ccuc_obj}	\\
    & \hspace{-3mm}  p_{i} (\bm{\omega}) = p_i - \alpha_i \bm{\omega}, \quad \forall i \in \mathcal{I} \label{eq:ccuc_response} \\
    & \hspace{-3mm}  \mathbb{P}_{\omega} \big[u_i \overline{P}_i  \leq p_{i} (\bm{\omega})  \leq \underline{P}_i u_i, \forall i \in \mathcal{I}  \big] \geq 1 -\textcolor{black}{\epsilon} \label{eq:ccuc_jointcc} \\  
    & \hspace{-3mm}\textcolor{black}{\alpha_i \leq u_{i}, \quad \forall i \in \mathcal I }  \label{eq:ccuc_feasibility_alpha} \\
    & \hspace{-3mm}  \sum_{i \in \mathcal{I}} p_i (\bm{\omega}) + W + \bm{\omega} = D \label{eq:ccuc_power_balance} \\ 
	& \hspace{-3mm} \sum_{i \in \mathcal{I}} \alpha_i = 1  \label{eq:ccuc_integrality}  \\
	& \hspace{-3mm} p_{i} \geq 0, 0 \leq \alpha_i \leq 1, u_i \in \{ 0, 1  \}, \quad \forall i \in \mathcal{I},    \label{eq:ccuc_variables}
\end{flalign}
\end{subequations}
\allowdisplaybreaks[0] where $u_i$  is a binary decision on the on/off status of controllable generator $i$  from set $\mathcal I$ and  $p_i (\bm{\omega})$ is the power output of this generator under uncertainty $\bm{\omega}$.  \textcolor{black}{Similarly to the current practice, in which the market operator provides forecasts and estimates the reserve requirements, it is assumed that forecasts $D$ and $W$, as well as parameters characterizing uncertainty $\bm{\omega}$ (e.g., distribution and statistical moments), are released by the market operator and are common for all market participants. Assuming that common forecasts are shared among market participants makes it possible to neglect the effects of information asymmetry that can be exploited by strategically acting market participants with proprietary (and different from the market)  information, see \cite{DVORKIN2019521}.} 
Eq.~\eqref{eq:ccuc_obj} minimizes the expected operating cost given decisions  $u_i$ and  $p_i (\bm{\omega})$ and production cost of each controllable generator given by coefficients $C_{2,i}$, $C_{1,i}$ and $C_{0,i}$. The output of generator $i$ under uncertainty is modeled   using a proportional control law in \eqref{eq:ccuc_response}, where $p_i $ is  a scheduled power output and  $\alpha_i$ is a reserve participation factor. \textcolor{black}{Note that the control law in \eqref{eq:ccuc_response} assumes that recourse decision $p_i (\bm{\omega})$ is parameterized in terms of first-stage decisions $p_i $ and $\alpha_i$, which corresponds to preventive security when the  operator aims to withstand uncertainty realizations without corrective control actions. Alternatively, one can replace \eqref{eq:ccuc_response} with a corrective recourse as explained in \cite{8017474}.}  The \textit{joint, two-sided} chance constraint in \eqref{eq:ccuc_jointcc} ensures that  $p_i (\bm{\omega})$  is within the minimum ($\underline{P}_i$) and maximum ($\overline{P}_i$) power output limits with the probability given by \textcolor{black}{$(1-\epsilon)$}, where \textcolor{black}{$\epsilon>0$} is a small number that represents the tolerance of the market to constraint violations. We assume that wind producers are modeled as undispatchable price-takers with the uncertain power outputs of $W + \bm{\omega}$, where  $W$ is a given forecast and  $\bm{\omega}$ is its uncertainty.  \textcolor{black}{Eq.~\eqref{eq:ccuc_feasibility_alpha} ensures that participation factor $\alpha_i=0$, if controllable generator $i$ is offline, i.e., $u_i =0$, and  attains a non-negative value from its domain range $0\leq \alpha_i \leq 1$, if otherwise.}   The system-wide power balance is enforced in \eqref{eq:ccuc_power_balance}, which balances the total output of conventional and wind power generation resources and demand. 
Eq.~\eqref{eq:ccuc_integrality} ensures  the sufficiency of reserve provided by controllable generators to cope with  uncertainty $\bm{\omega}$. The decision variables are declared in \eqref{eq:ccuc_variables}. Solving the CCUC in  \eqref{eq:ccuc} depends on  the treatment of \eqref{eq:ccuc_jointcc} and the assumptions made on  $\bm{\omega}$ as discussed below. 
\subsubsection{Approximation by individual chance constraints} To avoid dealing with the joint, two-sided chance constraint in \eqref{eq:ccuc_jointcc}, it is common to invoke two ad-hoc assumptions that follow from power system practices. First, it is assumed that violations on different generators are independent of one another during normal (steady-state) power system operations. Second, simultaneous violations of the minimum and maximum output limits on a given conventional generator are impossible. As a result,  \eqref{eq:ccuc_jointcc} can be  approximated by the following separate, one-sided chance constraints:
\allowdisplaybreaks
\begin{subequations}
\label{eq:ccuc_single_cc}
\begin{flalign}
    & \mathbb{P}_{\omega} \big[ p_{i} (\bm{\omega})  \leq \overline{P}_i u_i \big] \geq 1 -\epsilon_i,  \quad \forall i \in \mathcal{I} \label{eq:ccuc_single_cc_max} \\  
    & \mathbb{P}_{\omega} \big[u_i \underline{P}_i  \leq p_{i} (\bm{\omega}) \big]  \geq 1 -\epsilon_i,  \quad \forall i \in \mathcal{I}, \label{eq:ccuc_single_cc_min}  
\end{flalign}
\allowdisplaybreaks[0]
\textcolor{black}{where parameter $\epsilon_i$ denotes the tolerance of the market to constraint violations at controllable generator $i$. The value of $\epsilon_i$ is typically set to a small positive number, which is anticipated to vary for different systems based on the market's security preferences, and can be chosen via extensive data-driven simulations, e.g., as in \cite{6714594}, to meet a given ad-hoc reliability criteria (e.g., expected energy not served or loss of probability metrics). Previous  studies, e.g.,  \cite{8017474, 8600344, 7332992}, have also shown that this approximation limits the joint violation probability effectively due to a  few simultaneously active constraints. Further treatment of \eqref{eq:ccuc_single_cc} depends on the assumption made on uncertainty $\bm{\omega}$.} \textcolor{black}{Data-driven analyses in  \cite{7268773} show that  $\bm{\omega}$ can be parameterized using $\mu = \mathbb{E} [{\bm{\omega}}]$ as the mean forecast error and $\sigma^2 = Var[{\bm{\omega}}]$ as its variance. Given $\mu = \mathbb{E} (\bm{\omega}) $  and $\sigma^2 = Var[{\bm{\omega}}]$, terms $\mathbb{E}_{\bm{\omega}} \big[ C_{1,i} p_i (\bm\omega) \big] $ and  $\mathbb{E}_{\bm{\omega}}\big[ C_{2,i} p_i^2 (\bm\omega)\big]$ in \eqref{eq:ccuc_obj} can be replaced with the following deterministic expressions:}

\begin{flalign} 
&  \textcolor{black}{\mathbb{E}_{\bm{\omega}} \big[ C_{1,i} p_i (\bm\omega)\big] \!=\! C_{1,i}  \mathbb{E}_{\bm{\omega}} \big[( p_i -\alpha_i \bm\omega) \big]\!\!=\!\! C_{1,i}  ( p_i -\alpha_i \mu)} \label{eq:ccuc_single_cc1a} \\
& \textcolor{black}{\mathbb{E}_{\bm{\omega}} \big[ C_{2,i} p_i^2 (\bm\omega)\big] \!=\! C_{2,i}  \mathbb{E}_{\bm{\omega}} \big[( p_i -\alpha_i \bm\omega)^2 \big]\!\!=\!\! \mathbb{E}_{\bm{\omega}} \big[ C_{2,i}  ( p_i^2 -  \nonumber} \\
& \textcolor{black}{2  \bm\omega \alpha_i p_i + (\bm\omega \alpha_i)^2 \big] = C_{2,i}   p_i^2 -2  \mu \alpha_i p_i + \alpha_i^2 (\sigma^2 + \mu^2) }. \label{eq:ccuc_single_cc1b}
\end{flalign}
\end{subequations}
\textcolor{black}{Note that \eqref{eq:ccuc_single_cc1a}-\eqref{eq:ccuc_single_cc1b} are derived only using $\mu = \mathbb{E} (\bm{\omega}) $  and $\sigma^2 = Var[{\bm{\omega}}]$ and do not assume a particular parametric distribution (e.g., normal). Thus, using the result in \eqref{eq:ccuc_single_cc1a}-\eqref{eq:ccuc_single_cc1b} in  \eqref{eq:ccuc_obj} and  invoking that $\bm{\omega}\sim N(\mu, \sigma^2)$ to reformulate \eqref{eq:ccuc_single_cc_max}-\eqref{eq:ccuc_single_cc_min},  the CCUC problem in \eqref{eq:ccuc} is recast as:}
\allowdisplaybreaks
\begin{subequations}
\label{eq:ccuc_single_reformulation}
\begin{flalign}
    & \min_{p_i, \alpha_i, u_i} \sum_{i \in \mathcal{I}} \bigg[C_{0,i} u_{i} +  C_{1,i} (p_i - \mu \alpha_i) + \nonumber \\ &   \quad \quad \quad \quad \quad   C^{}_{2,i} \big(p_i^2 - 2 \mu \alpha_i p_i + \alpha_i^2 (\sigma^2 + \mu^2) \textcolor{black}{\big)} \bigg] \label{eq:ccuc_single_reformulation_obj}  \\
    &  p_{i} \leq \overline{P}_i u_i - \hat \sigma_i \alpha_i,  \quad \forall i \in \mathcal{I} \label{eq:ccuc_single_reformulation_max} \\   
    &   - p_{i} \leq -\underline{P}_i u_i - \hat \sigma_i \alpha_i ,  \quad \forall i \in \mathcal{I} \label{eq:ccuc_single_reformulation_min} \\   
        & \textcolor{black}{\alpha_i \leq u_{i}, \quad \forall i \in \mathcal I }  \label{eq:ccuc_single_reformulation_alpha} \\
    & \sum_{i \in \mathcal{I}} p_i  = D  - W \label{eq:ccuc_single_reformulation_power_balance}  \\   
    & \sum_{i \in \mathcal{I}} \alpha_i= 1   \label{eq:ccuc_single_reformulation_integrality} \\     
	&  p_{i} \geq 0, \alpha_i \geq 0, u_i \in \{ 0, 1  \}, \quad \forall i \in \mathcal{I},   \label{eq:ccuc_single_reformulation_variable}  
\end{flalign}
\end{subequations}  
\allowdisplaybreaks[0]
where  $\hat \sigma_i = ( \Phi^{-1}(1-\epsilon_i) \sigma -\mu)$ is a given parameter and $\Phi^{-1}(\cdot)$ is the  inverse cumulative distribution function of the standard normal distribution. \textcolor{black}{Note that, if $u_i=0$ in  \eqref{eq:ccuc_single_reformulation}, it follows from  \eqref{eq:ccuc_single_reformulation_alpha} that $\alpha_i = 0$ and $p_i=0$ due to \eqref{eq:ccuc_single_reformulation_max}-\eqref{eq:ccuc_single_reformulation_min}.} While  the constraints of \eqref{eq:ccuc_single_reformulation} are linear, the  objective function in \eqref{eq:ccuc_single_reformulation_obj} is quadratic, thus turning \eqref{eq:ccuc_single_reformulation} into a mixed-integer quadratic  program (MIQP), which can be solved by off-the-shelf solvers (e.g., \textcolor{black}{CPLEX}, Gurobi).

\subsubsection{Approximation by the Chebyshev inequality}
While the normal assumption on $\bm{\omega}$ in \eqref{eq:ccuc_single_reformulation} fares well in practice, e.g.,
\cite{doi:10.1137/130910312, 7332992}, it introduces some inaccuracies as empirically measured  uncertainty does not follow this distribution exactly, e.g., \cite{6942382, 7268773}. To overcome this limitation, $\bm{\omega}$ can be modeled using a set of distributions, rather than a single  distribution as in \eqref{eq:ccuc_single_reformulation}:
\begin{flalign} 
\Omega = \big\{ \mathbb{P}: \mathbb{E}_{\mathbb{P}} [\bm{\omega}]=\mu, \mathbb{E}_{\mathbb{P}} [\bm{\omega}^2]=\sigma^2\big\}, \label{eq:uncertainty_set}
\end{flalign} 
where uncertainty set $\Omega$ encapsulates all probability measures with given first- and second-order   moments $\mathbb{E}_{\mathbb{P}} [\bm{\omega}]$ and $\mathbb{E}_{\mathbb{P}} [\bm{\omega}^2]$. Assuming $\bm{\omega}\! \in\! \Omega$ makes it possible to recast  \eqref{eq:ccuc_jointcc} as  distributionally robust chance constraints \cite{XieThesis,7973099, SUMMERS2015116}:
\allowdisplaybreaks
\begin{subequations}
\label{eq:single_inf}
\begin{flalign}  
    & \inf_{\mathbb{P}_\omega \in \Omega } \mathbb{P}_{\omega} \big[ p_{i} (\bm{\omega})  \leq \overline{P}_i u_i \big] \geq 1 -\epsilon_i,  \quad \forall i \in \mathcal{I} \label{eq:ccuc_single_inf_max} \\  
    & \inf_{\mathbb{P}_\omega \in \Omega } \mathbb{P}_{\omega} \big[u_i \underline{P}_i  \leq p_{i} (\bm{\omega}) \big]  \geq 1 -\epsilon_i,  \quad \forall i \in \mathcal{I}. \label{eq:ccuc_single_inf_min}
\end{flalign}
\end{subequations}
\allowdisplaybreaks[0]
Applying the Chebyshev inequality to \eqref{eq:single_inf} as described in \cite{XieThesis,7973099, SUMMERS2015116},  the  CCUC problem in \eqref{eq:ccuc} can be replaced with:
\allowdisplaybreaks
\begin{subequations}
\label{eq:ccuc_chebyshev}
\begin{flalign} 
    & \min_{p_i, \alpha_i, u_i} \sum_{i \in \mathcal{I}} \bigg[C_{0,i} u_{i} +  C_{1,i} (p_i - \mu \alpha_i) + \nonumber \\ &   \quad \quad \quad \quad \quad   C^{}_{2,i} \big( p_i^2 - 2 \mu \alpha_i p_i + \alpha_i^2 (\sigma^2 + \mu^2) \textcolor{black}{\big)} \bigg]  \label{eq:ccuc_chebyshev_1} \\
    & \hspace{10mm} p_i \leq \overline{P}_i u_i - \tilde \sigma_i \alpha_i, \quad \forall i \in \mathcal{I} \\
    & \hspace{10mm} - p_i \leq - \underline{P}_i u_i - \tilde \sigma_i \alpha_i, \quad \forall i \in \mathcal{I} \\    
    & \hspace{10mm} \textcolor{black}{\alpha_i \leq u_{i}, \quad \forall i \in \mathcal I }  \label{eq:ccuc_chebyshev_alpha} \\
    & \hspace{10mm}\sum_{i \in \mathcal{I}} p_i  = D  - W \label{eq:ccuc_single_reformulation_power_balance2}  \\   
    & \hspace{10mm} \sum_{i \in \mathcal{I}} \alpha_i= 1   \label{eq:ccuc_single_reformulation_integrality2} \\     
	&  \hspace{10mm} p_{i} \geq 0, \alpha_i \geq 0, u_i \in \{ 0, 1  \}, \quad \forall i \in \mathcal{I},   \label{eq:ccuc_single_reformulation_variable2}  
\end{flalign}
\end{subequations}
\allowdisplaybreaks[0]
where $\tilde \sigma_i = (\sqrt{\frac{1-\epsilon_i}{\epsilon_i}} \sigma - \mu)$. Similarly to \eqref{eq:ccuc_single_reformulation},  \eqref{eq:ccuc_chebyshev} is a MIQP that can be solved efficiently with off-the-shelf solvers. Although adjusting parameter $\tilde\sigma_i$  allows for better fitting  of empirical data on uncertainty $\bm{\omega}$,  the accuracy of the Chebyshev approximation reduces when $\epsilon_i \rightarrow 0$ and its solution becomes unnecessarily conservative or  may even be infeasible,  \cite{7973099}. Note that $\epsilon_i$ in \eqref{eq:ccuc_chebyshev}  can be chosen such that  $\tilde\sigma_i=\hat\sigma_i$, i.e.,  \eqref{eq:ccuc_single_reformulation} and \eqref{eq:ccuc_chebyshev} yield identical solutions.  

\subsubsection{Exact SOC reformulation}

Motivated by the need to overcome the conservatism of the Chebyshev approximation in \eqref{eq:ccuc_chebyshev}, Xie and Ahmed  \cite{7973099} derived an  SOC equivalent of  \eqref{eq:ccuc_jointcc}. Using  \cite[Theorem 2]{7973099},  the CCUC problem in \eqref{eq:ccuc} is equivalent to:
\allowdisplaybreaks
\begin{subequations}
\label{eq:ccuc_drcc}
\begin{flalign} 
    & \min_{p_i, \alpha_i, u_i, y_i, \pi_i} \sum_{i \in \mathcal{I}} \bigg[C_{0,i} u_{i} +  C_{1,i} (p_i - \mu \alpha_i) + \nonumber  \\ &   \quad \quad \quad \quad \quad   C^{}_{2,i} \big( p_i^2 - 2 \mu \alpha_i p_i + \alpha_i^2 (\sigma^2 + \mu^2) \textcolor{black}{\big)} \bigg] \label{{eq:ccuc_drcc_1}}\\
    & p_i - y_i - \pi_i \leq  \frac{\overline{P}_i + \underline{P}_{\textcolor{black}{i}}}{2} u_i , \quad \forall i \in \mathcal{I} \label{eq:ccuc_drcc_up} \\
    & -p_i - y_i - \pi_i \leq - \frac{\overline{P}_i + \underline{P}_{\textcolor{black}{i}}}{2} u_i, \quad \forall i \in \mathcal{I} \label{eq:ccuc_drcc_down} \\
    & y_i^2 + \alpha_i^2 \sigma^2\leq \epsilon_i \bigg( \frac{\overline{P}_i - \underline{P}_i}{2} - \pi_i \bigg)^2, \quad \forall i \in \mathcal{I}  \label{eq:ccuc_drcc_socp} \\
    & \textcolor{black}{\alpha_i \leq u_{i}, \quad \forall i \in \mathcal I }  \label{eq:ccuc_drcc_alpha} \\    
    & \sum_{i \in \mathcal{I}} p_i = D  -W \label{eq:ccuc_drcc_power_balance}  \\   
    & \sum_{i \in \mathcal{I}} \alpha_i= 1    \\     
	&  p_{i}\! \geq\! 0, 0\! \leq\! \alpha_i \!\leq\! 1, y_{i}\! \geq\! 0, 0\!\leq\! \pi_{i}\! \leq \!\frac{\overline{P}_i\!-\!\underline{P}_i}{2}\!, u_i \!\in\! \{ 0, 1  \}, \forall i \in \mathcal{I},   
\end{flalign}
\end{subequations}
\allowdisplaybreaks[0]
where $y_i$ and $\pi_i$ are auxiliary variables and \eqref{eq:ccuc_drcc_up}-\eqref{eq:ccuc_drcc_socp} are exact equivalents of \eqref{eq:ccuc_jointcc}. Relative to the optimization in \eqref{eq:ccuc_single_reformulation} and \eqref{eq:ccuc_chebyshev} that only have linear constraints, the notable difference of \eqref{eq:ccuc_drcc} is constraint \eqref{eq:ccuc_drcc_socp}, which is a SOC constraint.

\subsection{Pricing with Chance Constraints via LP duality} \label{sec:lp_duality}

\subsubsection{\textcolor{black}{Prior work}}Motivated by the current practice of electricity markets to use LP duality for obtaining electricity prices, \cite{8329412} proposed to reduce the MIQP  in \eqref{eq:ccuc_single_reformulation} to a linear program (LP) by invoking two restrictive assumptions that  $\mu =0$, i.e.,  $\bm{\omega}\sim N(0,\sigma^2)$, and  $C_{2,i}=0$, which leads to the following MILP:
\allowdisplaybreaks
\begin{subequations}
\label{eq:ccuc_milp_reformulation}
\begin{flalign}
    & \min_{p_i, \alpha_i, u_i} \sum_{i \in \mathcal{I}} \bigg[C_{0,i} u_{i} +  C_{1,i} p_i \bigg]   \\
    & p_{i} \leq \overline{P}_i u_i - \hat \sigma_i \alpha_i,  \quad \forall i \in \mathcal{I} \label{eq:ccuc_drcc_up2} \\   
    & - p_{i}  \leq - \underline{P}_i u_i - \hat \sigma_i \alpha_i,  \quad \forall i \in \mathcal{I} \label{eq:ccuc_drcc_up22} \\       
        & \textcolor{black}{\alpha_i \leq u_{i}, \quad \forall i \in \mathcal I }  \label{eq:ccuc_drcc_alpha2} \\    
   & \sum_{i \in \mathcal{I}} p_i  = D  -W  \\   
  & \sum_{i \in \mathcal{I}} \alpha_i= 1   \\     
	&  p_{i} \geq 0, 0 \leq \alpha_i \leq 1, u_i \in \{ 0, 1  \}, \quad \forall i \in \mathcal{I},   
\end{flalign}
\end{subequations}  
\allowdisplaybreaks[0]
\begin{remark}
 \normalfont  \textcolor{black}{The MILP in \eqref{eq:ccuc_milp_reformulation} can be related to currently used deterministic market-clearing procedures, if the reserve contribution of each controllable generator $i$ is expressed in terms of $\alpha_i$. Indeed, the upward ($r_i^\uparrow$) and downward ($r_i^\downarrow$) reserve margins in \eqref{eq:ccuc_drcc_up2} and \eqref{eq:ccuc_drcc_up22} due to $\alpha_i$ can be computed as $r_i^\uparrow =r_i^\downarrow =\alpha_i \hat \sigma_i$. Accordingly, the total upward and downward reserve requirement allocated in \eqref{eq:ccuc_milp_reformulation} can be computed as $\sum_{i \in \mathcal I} r_i^\uparrow =\sum_{i \in \mathcal I} r_i^\downarrow =  \sum_{i \in \mathcal I} \alpha_i \hat \sigma_i$.}
\end{remark}

While the  MILP in \eqref{eq:ccuc_milp_reformulation} cannot be used for pricing electricity directly due to the presence of binary variables $u_i$, which prevents computing dual variables of binding constraints, it can be converted into an equivalent LP problem, which can be used for electricity pricing as proven in \cite{ONEILL2005269,8329412}.  First, \eqref{eq:ccuc_milp_reformulation} is solved using a MILP solver (e.g., \textcolor{black}{CPLEX}, Gurobi) to obtain the optimal values of binary variables $u^*_i$. Second, the following LP equivalent of \eqref{eq:ccuc_milp_reformulation} is solved to obtain dual variables:
\allowdisplaybreaks
\begin{subequations}
\label{eq:ccuc_lp_reformulation}
\begin{flalign}
    & \min_{p_i, \alpha_i, u_i} \sum_{i \in \mathcal{I}} \bigg[C_{0,i} u_{i} +  C_{1,i} p_i \bigg]   \\
(\overline{\mu}_i ): & \hspace{3mm}   p_{i} \leq \overline{P}_i u_i - \hat \sigma_i \alpha_i,  \quad \forall i \in \mathcal{I} \label{eq:lp_limits} \\
(\underline{\mu}_i): & \hspace{3mm}  - p_{i} \leq -\overline{P}_i u_i - \hat \sigma_i \alpha_i,  \quad \forall i \in \mathcal{I} \label{eq:lp_limits2} \\
    (\lambda): & \hspace{3mm} \sum_{i \in \mathcal{I}} p_i  = D -W    \label{eq:lp_power}  \\   
    (\chi): & \hspace{3mm}  \sum_{i \in \mathcal{I}} \alpha_i= 1  \label{eq:lp_regulation} \\    
     \textcolor{black}{(v_i):}     & \hspace{3mm}  \textcolor{black}{\alpha_i \leq u_{i}, \quad \forall i \in \mathcal I }  \label{eq:lp_alpha2} \\
    (\gamma_i): & \hspace{3mm}   u_i = u^*_i \label{eq:lp_commitment} \\      
	& \hspace{3mm}   p_{i} \geq 0, 0 \leq \alpha_i \leq 1, 0 \leq u_{i} \leq 1, \quad \forall i \in \mathcal{I},   
\end{flalign}
\end{subequations}  
\allowdisplaybreaks[0]
where  variables $u_i$ are converted into real-valued variables and \eqref{eq:lp_commitment} sets the value of this variable  to  $u_i^\ast$.  Since \eqref{eq:ccuc_milp_reformulation} and \eqref{eq:ccuc_lp_reformulation}   yield the same optimal solution, as proven in \cite{ONEILL2005269,8329412}, dual variables  $\lambda$, $\chi$, and $\gamma_i$ of constraints  \eqref{eq:lp_power}-\eqref{eq:lp_commitment} can be leveraged for electricity pricing. Next, \textcolor{black}{\cite{8329412} defines the robust competitive equilibrium as follows:}

\begin{definition} \label{definition1a} \normalfont
\textcolor{black}{A robust competitive equilibrium for the stochastic market defined by \eqref{eq:ccuc_milp_reformulation} is a set of prices $\{\lambda, \chi, \{ \gamma_i, \forall i \in \mathcal{I}\}\}$ and a set of dispatch decisions  $\{ p_i, \alpha_i, u_i, \forall i \in \mathcal{I}\}$ that (i)  clear the market, i.e., $\sum_{i \in \mathcal{I}} p_i   = D-W$ and $\sum_{i \in \mathcal I} \alpha_i = 1$, and (ii) maximize the profit of individual generators.} 
\end{definition} \noindent \textcolor{black}{We now prove that \eqref{eq:ccuc_milp_reformulation} and  \eqref{eq:ccuc_lp_reformulation} return this equilibrium in the following theorem:}

\allowdisplaybreaks
\begin{theorem}\label{theorem1}  \normalfont  Let  $\{p^\ast_i, \alpha^\ast_i, u^\ast_i, \forall i \in \mathcal{I}\}$ be an optimal solution of \eqref{eq:ccuc_milp_reformulation} and let   $\{\lambda^\ast, \chi^\ast, \{\gamma^\ast_i, \forall i \in \mathcal{I}\}\}$ be dual variables of \eqref{eq:ccuc_lp_reformulation}. Then  $\{\{p^\ast_i, \alpha^\ast_i, u^\ast_i, \gamma^\ast_i,\forall i \in \mathcal{I}\}, \lambda^\ast, \chi^\ast\}$ constitutes a robust competitive equilibrium, i.e.:
\begin{enumerate}
    \item The market clears at  $\sum_{i \in \mathcal{I}} \!p^\ast_i \!= \!D- W$ and $ \sum_{i \in \mathcal{I}} \alpha^\ast_i= 1 $.
    \item Each producer maximizes its profit under the payment of $\Gamma_i^\ast = \lambda^\ast  p_i^\ast + \chi^\ast  \alpha_i^\ast +  \gamma_i^\ast u_i^\ast$.
\end{enumerate}
\end{theorem}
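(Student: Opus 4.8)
The plan is to establish the two defining properties of the robust competitive equilibrium separately, with property (2) carrying essentially all of the content. Property (1) is immediate: by hypothesis $\{p^\ast_i, \alpha^\ast_i, u^\ast_i, \forall i \in \mathcal{I}\}$ is an optimal, and therefore feasible, solution of the MILP \eqref{eq:ccuc_milp_reformulation}, so it satisfies the two market-clearing equalities $\sum_{i \in \mathcal{I}} p^\ast_i = D - W$ and $\sum_{i \in \mathcal{I}} \alpha^\ast_i = 1$, which appear verbatim among the primal constraints. No further argument is required for this part.

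For property (2) I would use the decomposition-by-duality argument that underlies O'Neill-type integer pricing \cite{ONEILL2005269,8329412}. First I would write down each producer $i$'s individual profit-maximization problem: taking the prices $(\lambda^\ast, \chi^\ast, \gamma^\ast_i)$ as given, producer $i$ chooses $(p_i,\alpha_i,u_i)$ to maximize $\Gamma_i - C_{0,i} u_i - C_{1,i} p_i = \lambda^\ast p_i + \chi^\ast \alpha_i + \gamma^\ast_i u_i - C_{0,i} u_i - C_{1,i} p_i$ subject only to its own technical constraints, namely the output limits \eqref{eq:lp_limits}--\eqref{eq:lp_limits2}, the feasibility coupling \eqref{eq:lp_alpha2}, and the variable bounds. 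The system-wide constraints \eqref{eq:lp_power}--\eqref{eq:lp_regulation} are the market operator's and do not enter the individual problem. The central observation is that once the coupling constraints \eqref{eq:lp_power} and \eqref{eq:lp_regulation} are dualized with the multipliers $\lambda^\ast$ and $\chi^\ast$, the Lagrangian of the LP \eqref{eq:ccuc_lp_reformulation} splits additively over $i$, and the per-generator summand coincides with the negative of producer $i$'s profit augmented by the Lagrangian terms of its private constraints.

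I would then write the full KKT system of \eqref{eq:ccuc_lp_reformulation}---stationarity in $p_i$, $\alpha_i$, and $u_i$, together with primal feasibility, dual feasibility $\overline{\mu}_i, \underline{\mu}_i, v_i \geq 0$, and complementary slackness---and read off that its restriction to the block of generator $i$ is, term for term, exactly the KKT system of producer $i$'s individual problem with the same multipliers. Because that individual problem is a linear program (the integrality on $u_i$ having been relaxed to $0 \leq u_i \leq 1$ in \eqref{eq:ccuc_lp_reformulation}), its KKT conditions are both necessary and sufficient for global optimality; hence $(p^\ast_i, \alpha^\ast_i, u^\ast_i)$ maximizes producer $i$'s profit. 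Integer optimality of the commitment is then recovered for free: the profit is linear in $(p_i,\alpha_i,u_i)$ over a polytope, so its maximum is attained at a vertex with $u_i \in \{0,1\}$, and the relaxed and binary profit maxima therefore agree.

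The main obstacle is the commitment variable $u_i$ and its price $\gamma^\ast_i$: unlike $p_i$ and $\alpha_i$, the binary $u_i$ renders the original problem non-convex, so the on/off decision cannot be priced by convex duality alone. Here I would lean on the O'Neill construction \cite{ONEILL2005269,8329412}, already invoked earlier in the excerpt: since $u^\ast_i$ solves the MILP, fixing $u_i = u^\ast_i$ through \eqref{eq:lp_commitment} and relaxing integrality leaves the primal optimum unchanged, and the dual variable $\gamma^\ast_i$ of \eqref{eq:lp_commitment} is precisely the commitment payment that clears the stationarity condition in $u_i$. The delicate point, and the step I expect to require the most care, is showing that $\gamma^\ast_i u^\ast_i$ compensates producer $i$ exactly for its non-convex startup cost $C_{0,i} u^\ast_i$, so that the equilibrium profit is non-negative and no unilateral deviation in the commitment is profitable; this is where the relaxed bound $0 \leq u_i \leq 1$ and the sufficiency of the LP's KKT conditions together transfer optimality from the fixed-commitment LP back to the producer's own choice of $u_i$.
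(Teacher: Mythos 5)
Your proposal is correct and takes essentially the same route as the paper, whose own ``proof'' simply defers to the O'Neill-type construction of \cite{8329412} and which spells out exactly your dualize-and-decompose, match-the-KKT-blocks, invoke-convex-sufficiency argument for the SOC analogues in Theorem~\ref{theorem2} and Appendix~\ref{app:KKT_conditions}. One minor imprecision: your claim that the producer's relaxed polytope attains its maximum ``at a vertex with $u_i \in \{0,1\}$'' is neither true in general (the individual polytope can have vertices with fractional $u_i$) nor needed --- it suffices that the relaxed maximizer $(p_i^\ast,\alpha_i^\ast,u_i^\ast)$, certified optimal by the KKT conditions with the price $\gamma_i^\ast$ zeroing out the stationarity condition in $u_i$, already has integral $u_i^\ast$ and hence remains optimal over the integer-restricted subset.
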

\allowdisplaybreaks[0]
\begin{proof} See our previous work in \cite{8329412}.
\end{proof}

In other words, Theorem~\ref{theorem1} establishes that dual variables $\lambda$, $\chi$, and $\gamma$ represent prices for energy, reserve, and commitment allocations that attain the least-cost solution and  support a market equilibrium, i.e., no generator has any incentive to deviate from the solution of \eqref{eq:ccuc_milp_reformulation}. Similarly to the current electricity markets, Theorem~\ref{theorem1} entitles every  generator to receive the following three payments:  (i) $\lambda^\ast  p_i^\ast$ for the energy produced, (ii) $\chi^\ast  \alpha_i^\ast$  for the reserve provided, and (iii) $\gamma_i^\ast  u_i^\ast$ for the commitment status.

\subsubsection{\textcolor{black}{Extensions of prior work}} While still using the LP duality as in \cite{ONEILL2005269}, we extend the results from \cite{8329412} by demonstrating  that prices $\lambda$, $\chi$, and $\gamma_i$   internalize both uncertainty  ($\mu, \sigma$) and risk ($\epsilon_i$) parameters. Consider the stationary  conditions of \eqref{eq:ccuc_lp_reformulation}:
\allowdisplaybreaks
\begin{subequations}
\label{eq:ccuc_lp_price_expressions}
\begin{flalign}
     & \frac{\partial \mathcal L}{\partial  p_i} =  C_{1,i}  + \overline{\mu}_i - \underline{\mu}_i  - \lambda  = 0, \quad \forall i \in \mathcal I\\ 
     & \frac{\partial \mathcal L}{\partial \alpha_i} = \overline{\mu}_i  \hat \sigma_i +  \underline{\mu}_i  \hat \sigma_i - \chi + \textcolor{black}{\upsilon_i}  = 0, \quad \forall i \in \mathcal I\\
     & \frac{\partial \mathcal L}{\partial u_i} = C_{0,i} - \overline{\mu}_i \overline{P}_i + \underline{\mu}_i \underline{P}_i - \textcolor{black}{\upsilon_i} - \gamma_i = 0, \quad \forall i \in \mathcal I,
\end{flalign}
\end{subequations}
where $\mathcal L$ denotes the Lagrangian function of \eqref{eq:ccuc_lp_reformulation}:
\begin{flalign}
 \mathcal L = & \sum_{i \in \mathcal{I}} \bigg[C_{0,i} u_{i} +  C_{1,i} p_i   +
\overline{\mu}_i  ( p_{i} - \overline{P}_i u_i + \hat \sigma_i \alpha_i)  + \underline{\mu}_i (- p_{i}   & \nonumber \\  & +  \underline{P}_i u_i + \hat \sigma_i  \alpha_i) + \textcolor{black}{\upsilon_i (\alpha_i - u_i)} + \gamma_i (u_i^\ast - u_i)\bigg]   \nonumber & \\  & +\lambda \big(  D  - W-   \sum_{i \in \mathcal{I}}p_i \big)  +  \chi \big(  1- \sum_{i \in \mathcal{I}}  \alpha_i \big) &
\end{flalign}

Using the stationary  conditions in \eqref{eq:ccuc_lp_price_expressions} and $\hat \sigma_i = (\Phi^{-1} (1-\epsilon_i) \sigma - \mu )$, prices  $\lambda$, $\chi$, and $\gamma$ can be expressed as follows:
\begin{subequations}
\label{eq:ccuc_lp_price_expressions2}
\begin{flalign}
     & \lambda =  C_{1,i}+\overline{\mu}_i - \underline{\mu}_i, \quad \forall i \in \mathcal I \label{eq:ccuc_lp_price_expressions21}\\
     & \chi  = ( \Phi^{-1}(1-\epsilon_i) \sigma -\mu) (\overline{\mu}_i + \underline{\mu}_i) + \textcolor{black}{\upsilon_i}, \quad \forall i \in \mathcal I\\
    & \gamma_i = C_{0,i} - \overline{P}_i \overline{\mu}_i +\underline{P}_i \underline{\mu}_i - \textcolor{black}{\upsilon_i}, \quad \forall i \in \mathcal I.  \label{eq:ccuc_lp_price_expressions23}
\end{flalign}
\end{subequations} 
\allowdisplaybreaks[0]
As per \eqref{eq:ccuc_lp_price_expressions2}, reserve price $\chi$ explicitly depends on  $\mu$, $\sigma$ and $\epsilon_i$, while  energy and commitment prices $\lambda$  and $\gamma_i$ depend on these parameters implicitly via dual variables of inequality constraints $\underline{\mu}_i$ and $\overline{\mu}_i$. Unlike the scenario-based stochastic market designs in \cite{doi:10.1287/opre.1090.0800, 4162624, 6146389, 8246578, MORALES2014765}, the prices in  \eqref{eq:ccuc_lp_price_expressions2} incorporate  uncertainty and risk parameters without the need to consider multiple scenarios and trading off among per scenario and expected performance.  Notably, if inequality constraints in \eqref{eq:lp_limits} are not binding, i.e., $ \underline{\mu}_i = \overline{\mu}_i = 0 $, these prices reduce to  $\lambda =  C_{1,i}$, $ \chi = 0 $   and $\gamma_i = C_{0,i}$ that matches the prices of the deterministic market design implemented based on \cite{ONEILL2005269}.

\begin{remark} \label{remark2}
  \normalfont  \textcolor{black}{The results in Theorem~\ref{theorem1} and in \eqref{eq:ccuc_lp_price_expressions2}  are obtained for the optimization in \eqref{eq:ccuc_milp_reformulation}  under the assumption that $C_{2,i}=0$ and $\mu =0 $. However, the former assumption can be overcome by adjusting the value of $\epsilon_i$. For example, if the actual uncertainty is modeled as $\bm \omega \sim N (\mu, \sigma^2)$ and the desired tolerance to constraint violations is given by $\hat \epsilon_i$, the optimization \eqref{eq:ccuc_milp_reformulation} is still applicable, despite the underlying assumption that $\mu = 0$, if $\epsilon_i$ in \eqref{eq:ccuc_drcc_up2}-\eqref{eq:ccuc_drcc_up22}  is selected such that  $\Phi^{-1} ( 1- \epsilon_i ) \sigma = (\Phi^{-1} ( 1- \hat \epsilon_i )\sigma - \mu)$. }
  
\end{remark}

\begin{remark}
  \normalfont  Since Theorem~\ref{theorem1} is obtained by reducing \eqref{eq:ccuc} to a MILP and proved using the same procedure as in \cite{ONEILL2005269}, this market design inherits the same cost recovery and revenue adequacy properties as the market design in   \cite{ONEILL2005269} (which currently underlies US markets), i.e., it requires an uplift payment to each generator equal to $\gamma_i^\ast u_i^\ast$ to reflect the cost of commitment decisions.
\end{remark}

\textcolor{black}{Using the uplift payment mechanism, we can show that the equilibrium obtained with Theorem~\ref{theorem1} is sufficient to recover  the operating cost of each producer:}
\begin{corollary} \normalfont \textcolor{black}{Theorem~\ref{theorem1} ensures the full cost recovery by each producer, i.e., $\Pi_i^\ast = \Gamma_i^\ast - C_{0,i} u_i^\ast - C_{1,i} p_i^\ast  =0 $, under the robust competitive equilibrium.}
\end{corollary}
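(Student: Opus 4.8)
The plan is to evaluate each producer's profit directly from the payment rule in Theorem~\ref{theorem1} and the closed-form prices in \eqref{eq:ccuc_lp_price_expressions2}, and then show that the leftover terms vanish by complementary slackness for the LP \eqref{eq:ccuc_lp_reformulation}. Under the equilibrium the payment to generator $i$ is $\Gamma_i^\ast = \lambda^\ast p_i^\ast + \chi^\ast \alpha_i^\ast + \gamma_i^\ast u_i^\ast$, and, since this subsection assumes $C_{2,i}=0$ and $\mu=0$, its operating cost in \eqref{eq:ccuc_milp_reformulation} is $C_{0,i} u_i^\ast + C_{1,i} p_i^\ast$. Hence the quantity to be shown to vanish is $\Pi_i^\ast = \lambda^\ast p_i^\ast + \chi^\ast \alpha_i^\ast + \gamma_i^\ast u_i^\ast - C_{0,i} u_i^\ast - C_{1,i} p_i^\ast$, where the inclusion of the commitment term $\gamma_i^\ast u_i^\ast$ (the uplift) in the payment is exactly what is expected to deliver zero profit, consistent with the preceding remark.

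First I would substitute the price expressions \eqref{eq:ccuc_lp_price_expressions21}--\eqref{eq:ccuc_lp_price_expressions23}, which are themselves the stationarity conditions of \eqref{eq:ccuc_lp_reformulation} with $\hat\sigma_i = \Phi^{-1}(1-\epsilon_i)\sigma - \mu$. After substitution the $C_{1,i} p_i^\ast$ and $C_{0,i} u_i^\ast$ contributions cancel against the cost, and I would regroup the remainder by the three inequality multipliers $\overline{\mu}_i$, $\underline{\mu}_i$ and $\upsilon_i$, obtaining
\begin{flalign}
\Pi_i^\ast &= \overline{\mu}_i\big(p_i^\ast + \hat\sigma_i \alpha_i^\ast - \overline{P}_i u_i^\ast\big) + \underline{\mu}_i\big(-p_i^\ast + \underline{P}_i u_i^\ast + \hat\sigma_i \alpha_i^\ast\big) \nonumber \\ &\quad + \upsilon_i\big(\alpha_i^\ast - u_i^\ast\big). \nonumber
\end{flalign}
The three parenthesized factors here are precisely the constraint slacks that appear next to $\overline{\mu}_i$, $\underline{\mu}_i$ and $\upsilon_i$ in the Lagrangian $\mathcal L$ of \eqref{eq:ccuc_lp_reformulation}, namely the slacks of the upper-limit constraint \eqref{eq:lp_limits}, the lower-limit constraint \eqref{eq:lp_limits2}, and the feasibility constraint \eqref{eq:lp_alpha2}.

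Second, I would invoke complementary slackness at the optimal primal--dual pair of the LP: the product of each nonnegative inequality multiplier with the slack of its constraint is zero. This kills each of the three terms above, so $\Pi_i^\ast = 0$ for every $i\in\mathcal I$. I do not anticipate a substantive obstacle, as the claim is in essence an accounting identity once stationarity and complementary slackness are combined; the only points requiring care are ensuring the sign conventions of the Lagrangian terms line up with the regrouped factors, and noting that although $\lambda^\ast$ and $\chi^\ast$ are system-wide scalars, the per-generator substitution is legitimate because the stationarity conditions in \eqref{eq:ccuc_lp_price_expressions2} hold for each $i$ individually. The commitment equality \eqref{eq:lp_commitment} contributes nothing beyond supplying the expression for $\gamma_i^\ast$, since its multiplier is attached to a satisfied equality.
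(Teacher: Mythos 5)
Your proof is correct, but it takes a different route from the paper's. You substitute the stationarity conditions \eqref{eq:ccuc_lp_price_expressions2} into the payment $\Gamma_i^\ast$, cancel the cost terms, and observe that what remains is $\overline{\mu}_i$, $\underline{\mu}_i$ and $\upsilon_i$ each multiplied by the slack of its own constraint, so complementary slackness at the optimal primal--dual pair of \eqref{eq:ccuc_lp_reformulation} annihilates every term and yields $\Pi_i^\ast=0$ exactly. The paper instead reformulates \eqref{eq:ccuc_lp_reformulation} as the equilibrium problem \eqref{eq:ccuc_lp_price_expressions21}, applies LP strong duality to each producer's profit-maximization subproblem \eqref{eq:eq_lp_11}--\eqref{eq:eq_lp_3b}, and reads off $\Pi_i^\ast = v_i u_i^\ast \geq 0$. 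The two arguments rest on the same optimality theory (strong duality for an LP is stationarity plus complementary slackness), but yours is the more elementary accounting identity and, notably, it delivers the equality $\Pi_i^\ast = 0$ that the corollary literally asserts, whereas the paper's own proof only concludes nonnegativity of profit; conversely, the paper's equilibrium-problem framing is what it reuses later for the SOC cases, so it buys uniformity across Theorems~\ref{theorem1}--\ref{theorem3}. Your caveats are the right ones: the per-generator substitution is valid because \eqref{eq:ccuc_lp_price_expressions2} holds for each $i$, and the only fragility you inherit from the paper is that the stationarity conditions are written as equalities without the multipliers of the variable bounds ($p_i\geq 0$, $\alpha_i\leq 1$), so strictly speaking the cancellation as written presumes those bounds are not active or that their multipliers are carried along; this is a gap in the paper's own derivation, not one you introduced.
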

\begin{proof} \color{black}
Using \cite{8329412}, we reformulate \eqref{eq:ccuc_lp_reformulation} as the following equilibrium for a given  value of $u_i^\ast$: 
\begin{subequations}
\label{eq:ccuc_lp_price_expressions21}
\begin{flalign}
    \hspace{-10mm} & \bigg\{ \max_{ p_i, \alpha_i, u_i = u^\ast_i} \Pi_i  \label{eq:eq_lp_11} \\
    \hspace{-10mm} & p_i \leq \overline P_i u_i - \hat \sigma_i \alpha_i \label{eq:eq_lp_2} \\
     \hspace{-10mm} & - p_i \leq - \underline P_i u_i - \hat \sigma_i \alpha_i  \label{eq:eq_lp_3} \\
     \hspace{-10mm} & \alpha_i \leq u_i^\ast \bigg\}, \forall i \in \mathcal I  \label{eq:eq_lp_3b} \\     
       \hspace{10mm} \bigg\{ & (\lambda) :\quad\quad  \sum_{i \in \mathcal I} p_i + W = D  \label{eq:eq_lp_4} \\ 
      \hspace{10mm} &(\chi) :   \quad\quad \sum_{i \in \mathcal I} \alpha_i =1  \label{eq:eq_lp_5} \\ 
       \hspace{10mm} & (\gamma_i) : \quad\quad   u_i =  u_i^\ast, \forall i \in \mathcal I \label{eq:eq_lp_6} \bigg\}, 
\end{flalign}
where \eqref{eq:eq_lp_11}-\eqref{eq:eq_lp_3b} is solved by each producer to maximize their profit and \eqref{eq:eq_lp_4}-\eqref{eq:eq_lp_6} is the market-clearing problem. Let $\{\{p_i^\ast, \alpha_i^\ast, u_i^\ast,  \gamma_i^\ast, \forall i \in \mathcal I\}, \lambda^\ast, \chi^\ast \}$ be the robust equilibrium obtained from Theorem~\ref{theorem1}, i.e., it solves \eqref{eq:ccuc_lp_reformulation} and \eqref{eq:ccuc_lp_price_expressions21}. When \eqref{eq:ccuc_lp_price_expressions21} is solved,  the profit of each generator can be computed as  $\Pi_i^\ast = \Gamma_i^\ast - C_{0,i} u_i^\ast - C_{1,i} p_i^\ast   = \lambda^\ast p_i^\ast + \chi^\ast \alpha_i^\ast + \gamma_i^\ast u_i^\ast  - C_{0,i} u_i^\ast - C_{1,i} p_i^\ast$. To show that  $\Pi_i^\ast\geq 0$ at the optimal solution, we invoke the strong duality theorem for \eqref{eq:eq_lp_11}-\eqref{eq:eq_lp_3b}, which yields:
\begin{flalign}
\Pi_i^\ast   =  v_i u_i^\ast,
\end{flalign}
where $v_i \geq 0$ and, hence, $\Pi_i^\ast \geq 0$. 
\end{subequations} 

\end{proof}
 
Theorem~\ref{theorem1} is developed under the assumption that $\mu =0 $ and $C_{2,i}=0$. The effect of the first assumption on the optimal solution can be mitigated by tuning parameters $\epsilon_i$, \textcolor{black}{see Remark~\ref{remark2}} and the discussion in \cite{6652224}. \textcolor{black}{The second assumption does not allow for accurately\footnote{\textcolor{black}{Note that current electricity markets approximate quadratic production costs using piece-wise linear functions}} computing the expected operating cost, which is crucial for the efficiency of any stochastic electricity market design, see Morales et al. \cite{6656976}.}  \textcolor{black}{These  shortcomings cannot be addressed using LP duality as in \cite{ONEILL2005269,8329412} and motivate the main proposition of this paper, that is to invoke SOC duality for electricity pricing, which  allows for a rigorous stochastic  market-clearing procedure under high-fidelity assumptions on the underlying uncertainty and  internalizing this uncertainty in the market-clearing problem}.

\subsection{Pricing with Chance Constraints via SOC duality}\label{sec:pricing:chebyshev}

This sections deals with electricity pricing for distributionally robust formulations based on the Chebyshev approximation in \eqref{eq:ccuc_chebyshev} and the exact SOC reformulation in \eqref{eq:ccuc_drcc} and  assumes $C_{2,i}> 0$, which inhibits invoking  LP duality.

To show that \eqref{eq:ccuc_chebyshev} and \eqref{eq:ccuc_drcc}   can be used for electricity pricing, we will follow the same procedure as in \cite{ONEILL2005269, 8329412}. We show that the original mixed-integer problem in both cases can be converted into a MISOC program and has an augmented and continuous equivalent (i.e., when the binary decisions are fixed to the optimal value). Second, we will prove that the dual variables of the continuous equivalent are electricity prices and support a robust competitive equilibrium \textcolor{black}{defined similarly to Definition~\ref{definition1a}}:
\begin{definition} \label{definition1} \normalfont
A robust competitive equilibrium for the stochastic market defined by either  \eqref{eq:ccuc_chebyshev} or \eqref{eq:ccuc_drcc} is a set of prices $\{\lambda, \chi, \{ \gamma_i, \forall i \in \mathcal{I}\}\}$ and a set of dispatch decisions  $\{ p_i, \alpha_i, u_i, \forall i \in \mathcal{I}\}$ that (i)  clear the market, i.e., $\sum_{i \in \mathcal{I}} p_i   = D-W$ and $\sum_{i \in \mathcal I} \alpha_i = 1$, and (ii) maximize the profit of individual generators. 
\end{definition}

\subsubsection{Pricing under the Chebyshev approximation} 

Given Definition~\ref{definition1}, our hypothesis is that dispatch decisions $\{ p_i, \alpha_i, u_i, \forall i \in \mathcal{I}\}$ will be obtained by solving the mixed-integer optimization in  \eqref{eq:ccuc_chebyshev} and respective prices will be given by the dual solution of the  following augmented equivalent:
\allowdisplaybreaks
\begin{subequations} \label{eq:augumented_chebyshev}
\begin{flalign}
    & \min_{p_i, \alpha_i, u_i} \sum_{i \in \mathcal{I}} \bigg[C_{0,i} u_{i} +  C_{1,i} p_i  +   C^{}_{2,i} \big( p_i^2  + \alpha_i^2 \sigma^2 \textcolor{black}{\big)} \bigg]  \label{eq:ccuc_chebyshev_1a} \\
    & \hspace{10mm} p_i \leq \overline{P}_i u_i - \tilde \sigma_i \alpha_i, \quad \forall i \in \mathcal{I} \\
    & \hspace{10mm} - p_i \leq - \underline{P}_i u_i - \tilde \sigma_i \alpha_i, \quad \forall i \in \mathcal{I} \\    
    & \hspace{10mm} \textcolor{black}{\alpha_i \leq u_i, \quad \forall i \in \mathcal I}  \label{eq:ccuc_single_reformulation_alpha_2}  \\   
    & \hspace{10mm}\sum_{i \in \mathcal{I}} p_i  = D  - W \label{eq:ccuc_single_reformulation_power_balance2b}  \\   
    & \hspace{10mm} \sum_{i \in \mathcal{I}} \alpha_i= 1   \label{eq:ccuc_single_reformulation_integrality2b} \\     
    & \hspace{10mm} u_i = u_i^\ast, \quad \forall i \in \mathcal I \\
    & \hspace{10mm} p_i \geq 0, 0 \leq \alpha_i \leq 1, 0\leq u_i \leq 1, \forall i \in \mathcal I, 
\end{flalign}
\end{subequations}
\allowdisplaybreaks[0]
This hypothesis leads to the following theorem:

\begin{theorem}\label{theorem2} \normalfont Let  $\{p^\ast_i, \alpha^\ast_i, u^\ast_i, \forall i \in \mathcal{I}\}$ be an optimal solution of  \eqref{eq:ccuc_chebyshev} and let  $\{\lambda^\ast, \chi^\ast, \{\gamma^\ast_i, \forall i \in \mathcal{I}\}\}$  be dual variables of constraints (3d), (3e) and (13b) of the augmented equivalent in \eqref{eq:augumented_chebyshev}. Then $\{\{p^\ast_i, \alpha^\ast_i, u^\ast_i, \gamma^\ast_i,\forall i \in \mathcal{I}\}, \lambda^\ast, \chi^\ast, \}$ is a robust competitive equilibrium given by Definition~\ref{definition1}, i.e.:
\begin{enumerate}
    \item The market clears at $\sum_{i \in \mathcal{I}} \! p^\ast_i\! +\! W \!= \!D$ and $ \sum_{i \in \mathcal{I}}\! \alpha^\ast_i\!\!=\!\!1 $.
    \item Each producer maximizes its profit under the payment of $\Gamma_i^\ast =\lambda^\ast p_i^\ast + \chi^\ast  \alpha_i^\ast +  \gamma_i^\ast  u_i^\ast$.
\end{enumerate}
\end{theorem}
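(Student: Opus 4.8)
The plan is to mirror the proof of Theorem~\ref{theorem1} (which follows \cite{ONEILL2005269, 8329412}), but to replace LP duality with convex (second-order conic) duality so as to accommodate the quadratic production cost $C_{2,i}>0$. The argument proceeds in three stages: (i) show that \eqref{eq:ccuc_chebyshev} and its augmented continuous relaxation \eqref{eq:augumented_chebyshev} share the same primal optimizer, which immediately delivers claim~1 (market clearing); (ii) invoke strong conic duality for \eqref{eq:augumented_chebyshev} to obtain finite, well-defined prices $\lambda^\ast,\chi^\ast,\gamma_i^\ast$ as the dual variables of the power-balance constraint, the reserve-adequacy constraint, and the commitment-fixing constraint $u_i=u_i^\ast$; and (iii) verify that the stationarity and complementarity conditions of \eqref{eq:augumented_chebyshev} separate into the per-generator profit-maximization conditions, so that the equilibrium dispatch solves each producer's problem under these prices, yielding claim~2 and hence the equilibrium of Definition~\ref{definition1}.

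For stage~(i) I would argue exactly as in \cite{ONEILL2005269}: fixing the integers at their optimum through $u_i=u_i^\ast$ and relaxing $u_i\in\{0,1\}$ to $u_i\in[0,1]$ leaves the feasible set seen at the optimum unchanged, so \eqref{eq:augumented_chebyshev} attains the same objective value and the same $\{p_i^\ast,\alpha_i^\ast\}$ as \eqref{eq:ccuc_chebyshev}. Because the power-balance constraint \eqref{eq:ccuc_single_reformulation_power_balance2b} and the reserve-adequacy constraint \eqref{eq:ccuc_single_reformulation_integrality2b} are retained in \eqref{eq:augumented_chebyshev} and hold at every feasible point, claim~1 follows directly; the substantive work is concentrated in stages~(ii)--(iii).

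For stages~(ii)--(iii), observe that \eqref{eq:augumented_chebyshev} is convex: its objective \eqref{eq:ccuc_chebyshev_1a} is a separable convex quadratic in each $(p_i,\alpha_i)$ and all its constraints are affine, so after an epigraph reformulation of the term $C_{2,i}(p_i^2+\alpha_i^2\sigma^2)$ into a (rotated) second-order cone constraint it becomes a SOCP. I would then write its Lagrangian, attaching multipliers $\overline\mu_i,\underline\mu_i$ to the two capacity constraints, $\upsilon_i$ to $\alpha_i\le u_i$, and $\lambda,\chi,\gamma_i$ to the coupling and commitment-fixing constraints, and record the stationarity conditions in $p_i,\alpha_i,u_i$ (analogous to \eqref{eq:ccuc_lp_price_expressions}, but now carrying the extra derivative terms $2C_{2,i}p_i$ and $2C_{2,i}\sigma^2\alpha_i$). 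Next I would set up, for each producer $i$, the convex profit-maximization problem $\max\,\Pi_i=\lambda^\ast p_i+\chi^\ast\alpha_i+\gamma_i^\ast u_i-[C_{0,i}u_i+C_{1,i}p_i+C_{2,i}(p_i^2+\alpha_i^2\sigma^2)]$ over its private constraints (capacity limits and $\alpha_i\le u_i=u_i^\ast$), analogous to \eqref{eq:eq_lp_11}--\eqref{eq:eq_lp_3b} but with the quadratic cost retained. Since the only coupling in \eqref{eq:augumented_chebyshev} is through the two market-clearing constraints, which are priced out by $\lambda^\ast,\chi^\ast$, its Lagrangian separates across $i$; consequently the KKT system of \eqref{eq:augumented_chebyshev} coincides, generator by generator, with that of producer $i$'s problem, and convexity makes these conditions sufficient, so $(p_i^\ast,\alpha_i^\ast,u_i^\ast)$ globally maximizes $\Pi_i$. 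This proves claim~2.

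The step I expect to be the main obstacle is stage~(ii): casting the duality in conic form and certifying that strong duality holds with a \emph{finite} dual optimum, so that $\lambda^\ast,\chi^\ast,\gamma_i^\ast$ genuinely exist as the prices supporting the decomposition. For the Chebyshev case this is eased by the fact that all constraints are affine (so a constraint qualification is automatic for the convex quadratic objective), but the argument must be framed through the second-order cone of the epigraph reformulation, exhibiting a Slater-type strictly feasible point. Getting this conic duality right is the crux, precisely because the same template must subsequently survive the genuine SOC constraint \eqref{eq:ccuc_drcc_socp} of the exact reformulation \eqref{eq:ccuc_drcc}, where strong duality is no longer automatic and a strictly feasible point (ensured by $\epsilon_i>0$) is indispensable.
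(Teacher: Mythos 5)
Your proposal follows essentially the same route as the paper's proof: claim~1 from feasibility of the equality constraints, equivalence of the mixed-integer problem and its augmented continuous relaxation with fixed $u_i=u_i^\ast$ (the paper cites \cite[Proposition 1]{KUANG2019123} for this), an epigraph/SOC recasting of the quadratic cost via $x_i=p_i^2$, $z_i=\alpha_i^2$, and then showing that the KKT system of the centralized problem separates into the per-producer profit-maximization problems priced by $\lambda^\ast,\chi^\ast,\gamma_i^\ast$ (done in the paper's Appendix~\ref{app:KKT_conditions}). Your added attention to certifying strong conic duality via a Slater point is a refinement of, not a departure from, the paper's argument.
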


\begin{proof}  Consider \eqref{eq:ccuc_chebyshev}. If it is feasible and solved to optimality,  optimal values $p^\ast_i$ and $\alpha^\ast_i$ must satisfy equality constraints \eqref{eq:ccuc_single_reformulation_power_balance} and \eqref{eq:ccuc_single_reformulation_integrality}. As a result, it follows that $\sum_{i \in \mathcal{I}} \! p^\ast_i\! +\! W \!= \!D$ and $ \sum_{i \in \mathcal{I}}\! \alpha^\ast_i\!\!=\!\!1 $, i.e., the first postulate of Theorem~\ref{theorem2} holds.

Proving the second postulate of  Theorem~\ref{theorem2} requires showing that dual variables of the augmented optimization in \eqref{eq:augumented_chebyshev} represent and can be interpreted as marginal sensitivities of the equivalent constraints in the mixed-integer optimization in  \eqref{eq:ccuc_chebyshev}. This proof follows from  \cite[Proposition 1]{KUANG2019123}, which establishes equivalence between the optimal solution of a given MIQP problem and its augmented problem with relaxed integer decision set to their optimal values. Hence, dual variables $\lambda^\ast$, $\chi^\ast$, and $\gamma^\ast_i$ of the augmented problem in \eqref{eq:augumented_chebyshev} are sensitivities of the equivalent constraints in  \eqref{eq:ccuc_chebyshev}. 

Now we show that optimal values $p^\ast_i, \alpha^\ast_i, u^\ast_i, \lambda^\ast, \chi^\ast, \gamma^\ast_i$  maximize the profit of each producer. To this end, we recast \eqref{eq:augumented_chebyshev} as the following equivalent MISOC program using substitution $x_i = p_i^2$ and  $z_i = \alpha_i^2$: 
\allowdisplaybreaks[0]
\begin{subequations}
\label{eq:ccuc_chebyshev_socp}
\begin{flalign}
        &\hspace{-10mm} \min_{p_i, \alpha_i, u_i, x_i, z_i} \sum_{i \in \mathcal{I}} \bigg[C_{0,i} u_{i} +  C_{1,i} p_i  +   C^{}_{2,i} (x_i + \sigma^2z_i  )\bigg] \label{eq:ccuc_chebyshev_socp_obj} \\
  (\phi_i):   & \hspace{5mm} p_i^2 \leq x_i, \quad \forall i \in \mathcal{I} \label{eq:misocp_x} \\    
  (\psi_i):   & \hspace{5mm} \alpha_i^2 \leq z_i, \quad \forall i \in \mathcal{I}  \label{eq:misocp_z} \\
  (\overline{\mu}_i):   & \hspace{5mm} p_i \leq \overline{P}_i  u_i- \tilde \sigma_i \alpha_i, \quad \forall i \in \mathcal{I}  \\    
  (\underline{\mu}_i):   & \hspace{5mm}  -p_i \leq- \underline{P}_i  u_i - \tilde \sigma_i \alpha_i, \quad \forall i \in \mathcal{I}    \\
   \textcolor{black}{(\upsilon_i)}:    & \hspace{5mm} \textcolor{black}{\alpha_i \leq u_i, \quad \forall i \in \mathcal I}    \\   
  (\lambda):   & \hspace{5mm}   \sum_{i \in \mathcal{I}} p_i = D -W \label{eq:ccuc_chebyshev_socp_pwr_balance}    \\   
  (\chi):   & \hspace{5mm} \sum_{i \in \mathcal{I}} \alpha_i= 1    \label{eq:ccuc_chebyshev_socp_regulation}  \\ 
  (\gamma_i):   & \hspace{5mm} u_i=u_i^*, \quad \forall i \in \mathcal I  \label{eq:ccuc_chebyshev_socp_augmented}  \\     
	& \hspace{5mm} p_{i} \geq 0, \alpha_i \geq 0, 0 \leq u_i \leq 1, \quad \forall i \in \mathcal{I},   
\end{flalign}
\end{subequations}
\allowdisplaybreaks[0]
where   $x_i$ and $z_i$ are auxiliary decision variables, \eqref{eq:misocp_x}  and \eqref{eq:misocp_z} are SOC constraints. Note that dual variables of constraints in \eqref{eq:ccuc_chebyshev_socp} are given in parenthesis. 

In turn, the optimization in \eqref{eq:ccuc_chebyshev_socp} can be reformulated as the following equilibrium problem (as proven in Appendix~\ref{app:KKT_conditions}):
\allowdisplaybreaks
\begin{subequations} \label{eq:ep_gen}
\begin{flalign}
\bigg\{   \max_{p_i, u_i, \alpha_i, x_i, z_i } & \Pi_i  \label{eq:ep_gen1} \\
(\phi_i): \quad & p_i^2 \leq x_i  \label{eq:ep_gen2}\\    
(\psi_i): \quad  &\alpha_i^2 \leq z_i  \label{eq:ep_gen3}\\
(\overline{\mu}_i): \quad & p_i \leq \overline{P}_i  u_i- \tilde \sigma_i \alpha_i  \label{eq:ep_gen4}\\    
(\underline{\mu}_i): \quad  &  -p_i \leq -\underline{P}_i u_i - \tilde \sigma_i \alpha_i   \label{eq:ep_gen5} \\
(\upsilon_i):  \quad     &  \textcolor{black}{\alpha_i \leq u_i}  \bigg\}, \quad \forall i \in \mathcal{I}  \label{eq:ep_gen6}\\
\bigg\{ (\lambda) : \quad  &   \sum_{i \in \mathcal{I}} p_i  = D-W,   \label{eq:ep_gen7} \\   
(\chi) :  \quad &\sum_{i \in \mathcal{I}} \alpha_i = 1,    \label{eq:ep_gen8}   \\
(\mu_i) :  \quad &   u_i = u_i^*, \label{eq:ep_gen8b}  \bigg\}, 
\end{flalign}
\allowdisplaybreaks[0] 
\end{subequations}where \eqref{eq:ep_gen1}-\eqref{eq:ep_gen6} is solved by each producer individually and \eqref{eq:ep_gen7}-\eqref{eq:ep_gen8b} is solved by the market. Note that the objective function of each producer given by \eqref{eq:ep_gen1}  is profit-maximizing and is formulated based on Definition~\ref{definition1} as  $\Pi_i = \big( \lambda p_i + \chi  \alpha_i  + \gamma_i  u_i   - C_{0,i} u_{i} -  C_{1,i} p_i  -   C^{}_{2,i} (x_i + \sigma^2z_i)  \big)$. Each producer solves its optimization given by \eqref{eq:ep_gen1}-\eqref{eq:ep_gen5} and obtains optimal decisions $p_i^\prime, \alpha_i^\prime, u_i^\prime$ that must satisfy the market problem in \eqref{eq:ep_gen7}-\eqref{eq:ep_gen8b}, i.e.,  $\sum_{i \in \mathcal{I}} p_i^\prime + W = D$ and $\sum_{i \in \mathcal{I}} \alpha_i^\prime= 1$, which returns prices $\lambda^\prime$ and $\chi^\prime$. Under this equilibrium solution, the profit of each producer is maximized, due to the objective function in \eqref{eq:ep_gen1}, and  can be computed as $\Pi_i^\prime = \big( \lambda^\prime p_i^\prime + \chi^\prime  \alpha_i^\prime  + \gamma_i^\prime  u_i^\prime   - C_{0,i} u_{i}^\prime -  C_{1,i} p_i^\prime  -  C_{2,i} (p_i^\prime)^2 -C_{2,i} \sigma^2 (\alpha_i^\prime)^2  \big)$. 

Since the equilibrium problem in \eqref{eq:ep_gen} is equivalent to \eqref{eq:ccuc_chebyshev_socp}, as proven in Appendix~\ref{app:KKT_conditions}, and \eqref{eq:ccuc_chebyshev_socp} is equivalent to the original optimization in \eqref{eq:ccuc_chebyshev}, as per  \cite[Proposition 1]{KUANG2019123}, their optimal solutions are equal. Hence, we note   $\lambda^\ast = \lambda^\prime$, $\chi^\ast = \chi^{\prime}$, $\gamma_i^\ast = \gamma_i^\prime$, $p_i^\ast = p_i^\prime$, $\alpha_i^\ast = \alpha_i^\prime$, and $u_i^\ast = u_i^\prime$. This  leads to $\Pi_i^\ast =\Pi_i^\prime$. Since $\Pi_i^\prime$ is maximized by the optimization in \eqref{eq:ep_gen1}-\eqref{eq:ep_gen6}, so is $\Pi_i^\ast$. Thus,  $\{p^\ast_i, \alpha^\ast_i, u^\ast_i, \lambda^\ast, \chi^\ast, \gamma^\ast_i\}$  ensures that the second postulate of Theorem~\ref{theorem2} holds. 
\end{proof}

While semantically similar to Theorem~\ref{theorem1}, the result of  Theorem~\ref{theorem2} is a generalization of  Theorem~\ref{theorem1} that leverages SOC duality for electricity pricing and allows for more accurate market prices and dispatch allocations due to  (i) modeling  distributionally robust chance constraints in \eqref{eq:ccuc_chebyshev_socp}  (the assumption of $\bm{\omega}\sim N(0,\sigma^2)$ used in Theorem~\ref{theorem1} is no longer required) and (ii) considering quadratic production costs since $C_{2,i}>0$.  Accordingly, using Theorem~\ref{theorem2}, we can obtain  explicit expressions for energy, reserve and commitment prices by using stationary conditions of \eqref{eq:ccuc_chebyshev_socp} given in Appendix~\ref{app:KKT_conditions}. Indeed, re-arranging terms in   \eqref{eq:kkt_1}-\eqref{eq:kkt_3} leads to: 
\begin{subequations}
\begin{flalign}
& \lambda  = C_{1,i} + 2 C_{2,i} p_i + \overline{\mu}_i - \underline{\mu}_i \label{eq:chebkkt_1} \\
&\chi  = 2 C_{2,i} \sigma^2 \alpha_i + \overline{\mu}_i \tilde \sigma_i + \underline{\mu}_i \tilde \sigma_i - \textcolor{black}{\upsilon_i}  \label{eq:chebkkt_2}  \\
&\gamma_i = C_{0,i} - \overline{\mu}_i \overline{P}_i +\underline{\mu}_i \underline{P}_i + \textcolor{black}{\upsilon_i} \label{eq:chebkkt_3}.
\end{flalign}
Expressing $p_i$ and $\alpha_i$ from \eqref{eq:chebkkt_1}  and \eqref{eq:chebkkt_2} as functions of $\lambda$ and $\chi$ and plugging these expressions into \eqref{eq:ccuc_chebyshev_socp_pwr_balance} and \eqref{eq:ccuc_chebyshev_socp_regulation}, respectively, leads to the following expressions: 
\allowdisplaybreaks
\begin{flalign}
& \lambda = \bigg[ D-W +\sum_{i \in \mathcal{I}} \frac{C_{1,i} + \overline \mu_i - 
\underline \mu_{i}}{2C_{2,i}} \bigg]  / \sum_{i \in \mathcal {I}} \frac{1}{2C_{2,i}} \label{eq:chebkkt_a}\\ 
 & \chi = \bigg[  1 + \sum_{i \in \mathcal{I}} \frac{(\overline \mu_i + \underline \mu_i) \tilde \sigma_i - \textcolor{black}{\upsilon_i} }{2 C_{2,i} \sigma^2 } \bigg] / \sum_{i \in \mathcal {I}} \frac{1}{2C_{2,i} \sigma^2}. \label{eq:chebkkt_b}
\end{flalign}
\end{subequations}
\allowdisplaybreaks[0]
Similarly to \eqref{eq:ccuc_lp_price_expressions2}, $\lambda$ and $\gamma_i$ in \eqref{eq:chebkkt_3} and \eqref{eq:chebkkt_a} do not depend on uncertainty and risks parameters, while  $\chi$ in \eqref{eq:chebkkt_b} internalizes these parameters via $\tilde \sigma_i$.

\subsubsection{Pricing under the exact SOC reformulation}

Similarly to \eqref{eq:ccuc_chebyshev}, the optimization in  \eqref{eq:ccuc_drcc} is a MISOC problem and, therefore, we can follow the same procedure as described in Section~\ref{sec:pricing:chebyshev} to show that  \eqref{eq:ccuc_drcc} can yield a robust competitive equilibrium as given by Definition~\ref{definition1}.  First, we define the continuous equivalent of \eqref{eq:ccuc_drcc}:
\allowdisplaybreaks
\begin{subequations}
\label{eq:ccuc_drcc_eq}
\begin{flalign}
  &\!\!\!\!\!\!\!\!\!  \!\!\!\!\!\!\!\!\! \min_{p_i, \alpha_i, u_i, y_i, x_i, z_i, \pi_i}\!  \sum_{i \in \mathcal{I}}\! \bigg[C_{0,i} u_{i} \!+ \! C_{1,i} p_i\! +\!   C_{2,i} (x_i \!+\! z_i \sigma^2 ) \bigg] \\
(\phi_i)  : & \hspace{5mm} p_i^2 \leq x_i, \quad \forall i \in \mathcal{I} \label{eq:ccuc_drcc_eq_1}\\    
(\psi_i) : & \hspace{5mm}  \alpha_i^2 \leq z_i, \quad \forall i \in \mathcal{I} \label{eq:ccuc_drcc_eq_2} \\    
 (\overline{\rho}_i) :    & \hspace{5mm} p_i - \frac{\overline{P}_i + \underline{P}_i}{2} u_i \leq y_i + \pi_i, \quad \forall i \in \mathcal{I}  \\
  (\underline{\rho}_i) :     & \hspace{5mm} -p_i + \frac{\overline{P}_i + \underline{P}_i}{2} u_i \leq y_i + \pi_i, \quad \forall i \in \mathcal{I}  \\
    (\rho_i):  & \hspace{5mm} y_i^2 + \alpha_i^2 \sigma^2\leq \epsilon_i \bigg( \frac{\overline{P}_i - \underline{P}_i}{2} - \pi_i \bigg)^2, \quad \forall i \in \mathcal{I}  \label{eq:ccuc_drcc_eq_3} \\
      \textcolor{black}{(\upsilon_i):}   &   \hspace{5mm}  \textcolor{black}{\alpha_i \leq u_i} \label{eq:ccuc_drcc_eq_3b}  \\   
    (\lambda):  & \hspace{5mm} \sum_{i \in \mathcal{I}} p_i  = D-W  \label{eq:ccuc_drcc_eq_powerbalance} \\   
      (\chi):  & \hspace{5mm} \sum_{i \in \mathcal{I}} \alpha_i= 1  \label{eq:ccuc_drcc_eq_regulation} \\   
            (\gamma_i):  & \hspace{5mm} u_i= u_i^\ast  \label{eq:ccuc_drcc_eq_status} \\     
	& \hspace{-7mm} p_{i} \!\geq\! 0, \alpha_i\! \geq\! 0, y_{i} \!\geq\! 0, 0\! \leq\! \pi_{i}\! \leq\! \frac{\overline{P}_i\! -\! \underline{P}_i}{2}, 0 \!\leq\! u_i\! \leq \! 1,  \forall i \in \mathcal{I},   
\end{flalign}
\end{subequations}
\allowdisplaybreaks[0]
where $x_i = p_i^2$ and  $z_i = \alpha_i^2$ are auxiliary variables, \eqref{eq:ccuc_drcc_eq_1} and \eqref{eq:ccuc_drcc_eq_1} are auxiliary SOC constraints, and $u_i^*$ is the optimal solution of \eqref{eq:ccuc_drcc} that can be obtained using off-the-shelf solvers. Using the original mixed-integer optimization in \eqref{eq:ccuc_drcc} and  its augmented SOC equivalent in \eqref{eq:ccuc_drcc_eq}, we prove:

\begin{theorem}\label{theorem3} \normalfont Let  $\{p^\ast_i, \alpha^\ast_i, u^\ast_i, \forall i \in \mathcal{I}\}$ be an optimal solution of \eqref{eq:ccuc_drcc} and let  $\{\lambda^\ast, \chi^\ast, \{\gamma^\ast_i, \forall i \in \mathcal{I}\}\}$ be dual variables of constraints \eqref{eq:ccuc_drcc_eq_powerbalance}, \eqref{eq:ccuc_drcc_eq_regulation}  and \eqref{eq:ccuc_drcc_eq_status} of the augmented SOC equivalent in \eqref{eq:ccuc_drcc_eq}. Then $\{\{p^\ast_i, \alpha^\ast_i, u^\ast_i, \gamma^\ast_i,\forall i \in \mathcal{I}\}, \lambda^\ast, \chi^\ast\}$ is a robust competitive equilibrium given by Definition~\eqref{definition1}, i.e.:
\begin{enumerate}
    \item The market clears at $\sum_{i \in \mathcal{I}} \!p^\ast_i\! =\! D \!-\!W$ and $ \sum_{i \in \mathcal{I}} \!\alpha^\ast_i\!=\! 1 $.
    \item Each producer maximizes its profit under the payment of $\Gamma_i^\ast = \lambda^\ast p_i^\ast + \chi^\ast  \alpha_i^\ast +  \gamma_i^\ast  u_i^\ast$.
\end{enumerate}
\end{theorem}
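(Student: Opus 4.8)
The plan is to replicate verbatim the logical skeleton used for Theorem~\ref{theorem2}, exploiting that \eqref{eq:ccuc_drcc} is again a MISOC program and that \eqref{eq:ccuc_drcc_eq} is its continuous augmented equivalent obtained by introducing $x_i = p_i^2$, $z_i = \alpha_i^2$ and fixing $u_i = u_i^\ast$. The first postulate is immediate: if \eqref{eq:ccuc_drcc} is feasible and solved to optimality, the optimal $\{p_i^\ast, \alpha_i^\ast\}$ must satisfy the power-balance equality \eqref{eq:ccuc_drcc_power_balance} and the reserve-sufficiency equality, so $\sum_{i\in\mathcal I} p_i^\ast = D - W$ and $\sum_{i\in\mathcal I} \alpha_i^\ast = 1$ and the market clears.

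For the second postulate I would first invoke \cite[Proposition 1]{KUANG2019123} to establish that fixing the binaries at their optimal value $u_i^\ast$ and relaxing them to $[0,1]$ in \eqref{eq:ccuc_drcc_eq} preserves the optimal primal solution, so that the dual variables $\lambda^\ast$, $\chi^\ast$, $\gamma_i^\ast$ of the continuous SOC program \eqref{eq:ccuc_drcc_eq} are legitimate marginal sensitivities (prices) of the corresponding constraints in the mixed-integer problem \eqref{eq:ccuc_drcc}. I would then recast \eqref{eq:ccuc_drcc_eq} as an equilibrium problem in direct analogy to \eqref{eq:ep_gen}: the per-generator block is the profit-maximization $\max \Pi_i$ subject to the index-$i$ constraints \eqref{eq:ccuc_drcc_eq_1}--\eqref{eq:ccuc_drcc_eq_3b}, with $\Pi_i = \lambda p_i + \chi \alpha_i + \gamma_i u_i - C_{0,i} u_i - C_{1,i} p_i - C_{2,i}(x_i + z_i \sigma^2)$, while the coupling equalities \eqref{eq:ccuc_drcc_eq_powerbalance}--\eqref{eq:ccuc_drcc_eq_status} constitute the market-clearing problem that fixes $\lambda$, $\chi$, $\gamma_i$. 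The crucial structural observation is that every constraint \eqref{eq:ccuc_drcc_eq_1}--\eqref{eq:ccuc_drcc_eq_3b}, including the exact reformulation cone \eqref{eq:ccuc_drcc_eq_3} and the auxiliary bounds on $y_i$ and $\pi_i$, is separable across $i$, so the decomposition of the KKT system of \eqref{eq:ccuc_drcc_eq} into per-producer subproblems plus the two market-level equalities is exact.

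The hardest step will be the equilibrium-equivalence derivation itself, i.e., verifying that the stationarity, primal/dual feasibility and complementary-slackness conditions of \eqref{eq:ccuc_drcc_eq} coincide with the union of the KKT systems of the producer subproblems and the market problem glued along the shared multipliers $\lambda^\ast$, $\chi^\ast$, $\gamma_i^\ast$. This is the analogue of the Appendix~\ref{app:KKT_conditions} argument but is more delicate than in Theorem~\ref{theorem2} because of the extra local variables $y_i, \pi_i$ and the presence of two coupled second-order cones, namely the epigraph cone \eqref{eq:ccuc_drcc_eq_2} and the Xie--Ahmed cone \eqref{eq:ccuc_drcc_eq_3} in which $\alpha_i$, $y_i$ and $\pi_i$ appear simultaneously; one must track the conic multipliers carefully to confirm that no coupling term survives outside \eqref{eq:ccuc_drcc_eq_powerbalance}--\eqref{eq:ccuc_drcc_eq_regulation}. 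This requires strong duality for the convex SOC program \eqref{eq:ccuc_drcc_eq}, which follows from Slater's condition under strict feasibility of \eqref{eq:ccuc_drcc}, rendering the KKT conditions both necessary and sufficient. Once this equivalence is in place, the closing argument is identical to Theorem~\ref{theorem2}: since the equilibrium is equivalent to \eqref{eq:ccuc_drcc_eq}, which is equivalent to \eqref{eq:ccuc_drcc} by \cite[Proposition 1]{KUANG2019123}, the optimal values coincide, giving $\lambda^\ast = \lambda'$, $\chi^\ast = \chi'$, $\gamma_i^\ast = \gamma_i'$, $p_i^\ast = p_i'$, $\alpha_i^\ast = \alpha_i'$, $u_i^\ast = u_i'$ and hence $\Pi_i^\ast = \Pi_i'$; because $\Pi_i'$ is maximized by the producer subproblem, so is $\Pi_i^\ast$, which establishes the second postulate.
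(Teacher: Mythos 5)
Your proposal is correct and follows essentially the same route as the paper: the published proof of Theorem~\ref{theorem3} simply states that the argument is analogous to that of Theorem~\ref{theorem2} and omits the details, and your write-up supplies exactly that analogous chain (market clearing from feasibility of \eqref{eq:ccuc_drcc}, the augmented-equivalent/dual-sensitivity step via \cite[Proposition 1]{KUANG2019123}, the per-producer equilibrium decomposition, and the KKT matching in the spirit of Appendix~\ref{app:KKT_conditions}). Your additional remarks on tracking the multipliers of the two cones \eqref{eq:ccuc_drcc_eq_2} and \eqref{eq:ccuc_drcc_eq_3} and on invoking Slater's condition are a useful elaboration of details the paper leaves implicit, but they do not constitute a different approach.
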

\begin{proof}
Note that \eqref{eq:ccuc_drcc} and  \eqref{eq:ccuc_drcc_eq} are MISOC and SOC problems, and, thus, are similar to  \eqref{eq:ccuc_chebyshev}  and  \eqref{eq:ccuc_chebyshev_socp} in Theorem~\ref{theorem2}. Therefore, Theorem~\ref{theorem3} can be proven analogously to the proof of Theorem~\ref{theorem2}. We omit the proof for brevity.
\end{proof}

Using the equilibrium established by  Theorem~\ref{theorem3}, we can analyze the dependency of the resulting prices on uncertainty and risk parameters. Consider the Lagrangian function of \eqref{eq:ccuc_drcc_eq} and  recall that $x_i = p_i^2$ and  $z_i = \alpha_i^2$:
\begin{flalign}
\mathcal{L}& = \sum_{i \in \mathcal{I}}\! \bigg[C_{0,i} u_{i} \!+ \! C_{1,i} p_i\! +\!   C_{2,i} (p_i^2 \!+\! \alpha_i^2 \sigma^2 ) \!+\! \overline{\rho}_i \!  \big( \!p_i\! -\! \frac{\overline{P}_i + \underline{P}_{\textcolor{black}{i}}}{2} u_i    \nonumber &  \\ &  - y_i - \pi_i  \big)     \! +\! \underline{\rho}_i   \big( \frac{\overline{P}_i + \underline{P}_i}{2} u_i -p_i  - y_i - \pi_i     \big)   \!+\! \rho_i   \bigg( y_i^2 + \alpha_i^2 \sigma^2 \nonumber &   \\  & -\epsilon_i \big( \frac{\overline{P}_i - \underline{P}_i}{2} - \pi_i \big)^2  \bigg)  \!+ \! \textcolor{black}{\upsilon_i ( u_i - \alpha_i)} \!+ \gamma_i\! (u_i^\ast - u_i) \bigg]   \nonumber &  \\ &  +\lambda \big( \!D -\!   W \!-\! \sum_{i \in \mathcal{I}} p_i\big)    + \chi \big(1\!-\!\sum_{i \in \mathcal{I}} \alpha_i \big)  &                    
\end{flalign}
and obtain the following stationary conditions: 
\begin{subequations}
\begin{flalign}
&\frac{\partial \mathcal{L}}{\partial p_i}= C_{1,i} + 2 C_{2,i} p_i + \overline{\rho}_i   - \underline{\rho}_i   - \lambda = 0 \label{eq:dual_prices_drcc_eq1a}  \\
&\frac{\partial \mathcal{L}}{\partial \alpha_i}= 2 C_{2,i} \sigma^2 \alpha_i +   2\rho_i \sigma^2 \alpha_i  - \chi - \textcolor{black}{\upsilon_i} = 0.  \label{eq:dual_prices_drcc_eq1b}  \\
&\frac{\partial \mathcal{L}}{\partial u_i}= C_{0,i} + \frac{\overline{P}_i + \underline{P}_i}{2} (\underline{\rho}_i - \overline{\rho}_i) +\textcolor{black}{\upsilon_i} -\gamma_i.  \label{eq:dual_prices_drcc_eq1c} 
\end{flalign}
\end{subequations}
Expressing $p_i$ and $\alpha_i$ from \eqref{eq:dual_prices_drcc_eq1a} and \eqref{eq:dual_prices_drcc_eq1b} as functions of $\lambda$ and $\chi$, respectively,  and using these expressions in  \eqref{eq:ccuc_drcc_eq_powerbalance} and \eqref{eq:ccuc_drcc_eq_regulation} leads to:
\begin{subequations} \label{eq:exact_prices}
\begin{flalign}
& \lambda = \bigg[ D - W - \sum_{i \in \mathcal I } \frac{(\underline{\rho}_i - \overline{\rho}_i - C_{1,i})}{2 C_{2,i}} \bigg]/ \sum_{i \in \mathcal{I}} \frac{1}{2 C_{2,i}} \label{eq:exact_prices_lambda}  \\
& \chi = \frac{1}{\sum_{i \in \mathcal I} 1/ (2 C_{2,i} \sigma^2  + 2\rho_i \sigma^2) } - \textcolor{black}{v_i}.  \label{eq:exact_prices_chi}
\end{flalign}
Using  \eqref{eq:dual_prices_drcc_eq1c}, we obtain:
\begin{flalign}
\gamma_i = C_{0,i} + \frac{\overline{P}_i + \underline{P}_i}{2} (\underline{\rho}_i - \overline{\rho}_i) + \textcolor{black}{\upsilon_i}.\label{eq:exact_prices_gamma}
\end{flalign}

\end{subequations}

Note that similarly to the prices in \eqref{eq:exact_prices_lambda} and \eqref{eq:exact_prices_gamma},  $\lambda$ and $\gamma_i$ in  \eqref{eq:exact_prices_lambda} and \eqref{eq:exact_prices_gamma} are independent of uncertainty and risk parameters, while $\chi$  in \eqref{eq:exact_prices_chi} internalizes $\sigma$.

\subsection{Design Properties of the SOC-based Markets}

The market outcomes obtained under Theorems~\ref{theorem2} and \ref{theorem3} not only internalize uncertainty and risk parameters in the price formation process, but also are helpful in ensuring such market design properties as cost recovery and revenue adequacy.

\subsubsection{Cost recovery} Cost recovery implies that  producers recover their operating cost from market outcomes and can be formalized as $\Pi_i \geq 0, \forall i \in \mathcal I$\textcolor{black}{, where $\Pi_i = \big( \lambda p_i + \chi  \alpha_i  + \gamma_i  u_i   - C_{0,i} u_{i} -  C_{1,i} p_i  -   C^{}_{2,i} (x_i + \sigma^2z_i)  \big)$ as defined for the equilibrium problem in \eqref{eq:ep_gen}. Since the optimization problem of each producer in \eqref{eq:ep_gen1}-\eqref{eq:ep_gen6} is convex, we can invoke  the strong duality theorem for the optimal market outcomes. The strong duality theorem makes it possible to equate the primal and dual objective functions of \eqref{eq:ep_gen1}-\eqref{eq:ep_gen6} as follows:}
\allowdisplaybreaks
\begin{subequations} \label{eq:cost_recovery}
\begin{flalign}
\Pi_i = \gamma_i u_i^\ast,  \label{eq:cost_recovery1}
\end{flalign}
\allowdisplaybreaks[0]
\textcolor{black}{where $\Pi_i$ is the primal objective function and $\gamma_i u_i^\ast$ is the dual objective function.} At the optimum,  \eqref{eq:ccuc_chebyshev_socp} and \eqref{eq:ep_gen} yield equivalent solutions (see Appendix~\ref{app:KKT_conditions}). Therefore, we express $\gamma_i^\ast$ from \eqref{eq:chebkkt_3} and recast \eqref{eq:cost_recovery1} as follows:
\allowdisplaybreaks
\begin{flalign}
\Pi_i^\ast = C_{0,i} u_i^\ast + \upsilon^\ast_i u_i^\ast  + \underline \mu_i^\ast \underline P_i u_i^\ast - \overline \mu_i^\ast \overline P_i u_i^\ast.  \label{eq:cost_recovery2}
\end{flalign}
\allowdisplaybreaks[0]
Since \eqref{eq:ep_gen1}-\eqref{eq:ep_gen6} is a maximization problem,  $\overline \mu_i^\ast \geq 0$, $\underline \mu_i^\ast \geq 0$ and \textcolor{black}{$\upsilon^\ast_i \geq 0$}. Therefore, the first three terms of \textcolor{black}{\eqref{eq:cost_recovery2}} are always non-negative. Since $\underline \mu_i^\ast \geq 0$,  the last term in  \textcolor{black}{\eqref{eq:cost_recovery2}} is negative and, thus, $\Pi^\ast_i$ can attain negative values. However, there are two specific, but practical cases to ensure that $\Pi^\ast_i \geq 0$. First, we can restrict $\gamma^\ast_i \geq 0$ to avoid negative (confiscatory) commitment prices, as common in real-life markets (see \cite{SIOSHANSI2014143}). This will make all terms in \textcolor{black}{\eqref{eq:cost_recovery2}} non-negative and guarantee that $\Pi^\ast_i \geq 0$.  Second, if the market is convex, i.e., controllable generators have $\underline P_i =0$ and $p_i \in [0, \overline P_i]$, the cost recovery is guaranteed: 
\begin{flalign}
\Pi_i^\ast \geq  0  \label{eq:cost_recovery21}
\end{flalign}
 This is an important property of the proposed SOC-based market design that allows for explicitly considering uncertainty and risk parameters in the price formation process without compromising social welfare.

Analogously, in the case of the exact SOC reformulation,  \eqref{eq:ccuc_drcc} can be used to formulate an equilibrium problem similar to \eqref{eq:ep_gen}. In this equilibrium problem, each producer is modeled as $\{ \max \Pi_i | \{\text{Eq.~} \eqref{eq:ccuc_drcc_eq_1}-\eqref{eq:ccuc_drcc_eq_3}  \} \}$. Hence, \textcolor{black}{ similarly to the Chebyshev case,} we can exploit the strong duality property to obtain: 
\begin{flalign}
\Pi_i^\ast = \rho_i^\ast \epsilon_i \big(\frac{\overline{P}_i - \underline{P}_i}{2}\big)^2+\gamma_i^\ast u_i^*. 
 \label{eq:cost_recovery3}
\end{flalign}
\end{subequations}
Since $\rho^\ast_i \geq 0$, $(\overline{P}_i - \underline{P}_i) \geq 0$, we can ensure that $\Pi_i^\ast \geq 0$ in \eqref{eq:cost_recovery3} similarly to  \eqref{eq:cost_recovery21}, i.e., either  we restrict $\gamma_i^\ast \geq 0$ (see \cite{SIOSHANSI2014143}) or the market is convex and $p_i \in \big[ 0, \overline{P}_i\big]$. 

Note that if no additional restriction is imposed on non-negativity of dual variables $\underline{\mu}^\ast_i$ and $\gamma^\ast_i$, one can compute the uplift payment for each producer as $\Upsilon_i^\ast = \max\big[0, -\Pi_i^\ast \big]$, if $\Pi_i^\ast <0$.

\subsubsection{Revenue adequacy} Revenue adequacy is needed to ensure that the total payment from consumers collected by the market operator covers  the total payment to producers made by the market operator. Since the stochastic market designs in  Theorems~\ref{theorem2}-\ref{theorem3} are based on the same principles as the currently practiced market design in \cite{ONEILL2005269}, they are also revenue-inadequate. Thus,  the market revenue deficit ($\Delta^\ast$) is:
\allowdisplaybreaks
\begin{subequations}
\begin{flalign}
\Delta^\ast = -\min \big[ 0,  \sum_{i \in \mathcal I} \Gamma_i^\ast+ \lambda^\ast W -\lambda^\ast D      \big]   \label{eq:deficit},
\end{flalign}
\end{subequations}
\allowdisplaybreaks[0]
where the first two terms  represent the payment to controllable and wind power producers and the last term is the payment collected from consumers. Recall that  Theorems~\ref{theorem2}-\ref{theorem3} define  $\Gamma_i^\ast = \lambda^\ast p_i^\ast + \chi^\ast \alpha_i^\ast + \gamma_i^\ast u_i^\ast $ and establish that $\sum_{i \in \mathcal I} \alpha_i^\ast =1$ and  $\sum_{i \in \mathcal I} p^\ast_i =  (D-W)$. Hence, \eqref{eq:deficit} is recast as:
\allowdisplaybreaks
\begin{subequations}
\begin{flalign}
\Delta^\ast =  - \min \big[ 0, - \chi^\ast - \sum_{i \in \mathcal I} \gamma_i^\ast u_i^\ast \big]. \label{eq:deficit2}
\end{flalign}
\end{subequations}
\allowdisplaybreaks[0] 
Since $\chi^\ast \geq 0$, the sign of \eqref{eq:deficit2} depends on $\gamma_i^\ast$, which can attain both negative and positive values. Hence, if $\Delta^\ast \geq 0$ in \eqref{eq:deficit2}, this deficit must be additionally allocated among consumers, e.g., as in \cite{ONEILL2005269}. However, similarly to the cost recovery properties discussed above, we can guarantee $\Delta^\ast =0$, i.e. the market is revenue-adequate, in the special cases of non-confiscatory prices ($\gamma_i^\ast \leq 0$) and convex markets ($p_i \in \big[ 0, \overline{P}_i\big]$). Indeed, if $\gamma_i^\ast \geq 0$, then $(- \chi^\ast - \sum_{i \in \mathcal I} \gamma_i^\ast u_i^\ast) \leq 0$, which leads to $\Delta^\ast =0$. Similarly, if the market is convex $p_i \in [ 0, \overline{P}_i]$, we obtain from \eqref{eq:chebkkt_3} and \eqref{eq:exact_prices_gamma} that $\gamma^\ast_i \geq 0$, which results in $(- \chi^\ast - \sum_{i \in \mathcal I} \gamma_i^\ast u_i^\ast) \leq 0$ and $\Delta^\ast =0$.

\subsubsection{Expected vs Per Scenario Performance}

The cost recovery and revenue adequacy properties described above are shown for expected quantities, i.e., the assumption is that $\sum_{i \in \mathcal{I}} p_i^\ast = D-W$ and $\sum_{i \in \mathcal{I}} \alpha_i^\ast= 1$. However, since these two constraints are always met, if the optimizations in \eqref{eq:ccuc_chebyshev} and \eqref{eq:ccuc_drcc}  are feasible, we can invoke \cite[Lemma 2.1]{doi:10.1137/130910312}, which ensures that the expected solution is ``viable'', equivalent to the solution   for every realization of uncertainty assumed on random variable $\bm{\omega}$, e.g., $\sum_{i \in \mathcal I} p_i (\bm{\omega}) +W+\bm{\omega}=D$. \textcolor{black}{As a result, the cost recovery  and revenue adequacy properties of the market outcomes obtained with Theorems~\ref{theorem2}-\ref{theorem3} hold for both the expected case and every realization of $\bm{\omega}$ drawn consistently with the assumed uncertainty (e.g., distribution parameters or uncertainty set) and risk tolerance (e.g., tolerance to violating chance constraints).} Hence, unlike scenario-based stochastic programming \cite{8246578}, the proposed market designs do not require trading-off market outcomes among the expected and per scenario cases at the expense of increasing the operating cost.

\section{Case Study}

 The case study is carried out on the 8-zone ISO New England testbed \cite{7039273,Li2017An8I}, which includes 76 thermal generators with a total installed capacity of roughly 30 GW and techno-economic characteristics reported in \cite{7039273}. \textcolor{black}{The notable feature of wind power modeling in \cite{7039273} is that it adopts an agent-based approach to  account for  the  effects of
local weather conditions, changes in the mix of wind turbine
types, and changes in the geographical placement of wind
turbines in order to model future wind penetration outputs at the system level.} It is assumed that all nuclear power plants are committed ($\approx$ 8 GW) to serve base loads.  The forecast wind power output ($W$) is modeled as described in \cite{Li2017An8I} for three different penetration levels:  2\% (current), 10\% and 20\% of the total demand. \textcolor{black}{We additionally assume that $W\geq0$, i.e. the 
 system-wide wind power production is always non-negative},  as well as that $\sigma=0.2 W$ and  $ \epsilon= \epsilon_i, \forall i \in \mathcal I$, i.e., the market operator  has a  uniform tolerance to  constraint violations. \textcolor{black}{Furthermore, we vary the value of parameter $\epsilon$ in the range from 0.0001 to 0.25 in order to capture the sensitivity of market outcomes to a wide range of choices available for this parameter. In practice, each market operator will need to calibrate this value to match their security preferences given the specifics of the underlying transmission system.} \textcolor{black}{All numerical results presented below were computed on a 2.9 GHz Intel Core i5 with 8 GB RAM under macOS Mojave. Table~\ref{tab:comp_times} reports the size of each optimization problem solved. Note that each optimization problem, including the exact MISOC equivalent,  was solved under 3 seconds. }

\begin{table}[t]
\renewcommand{\arraystretch}{0.8}
\setlength\tabcolsep{3.5 pt}
\captionsetup{font={color=blue}}
\caption{Numbers of Variables and Constraints}
\label{tab:comp_times}
\centering
{\color{black}
\begin{tabular}{c|ccc}
\toprule
\# of&  MILP in \eqref{eq:ccuc} & Chebyshev in \eqref{eq:ccuc_chebyshev} &  Exact SOC in \eqref{eq:ccuc_drcc} \\
 \midrule
Binary variables      & $N_{\mathcal I}$ &$N_{\mathcal I}$ & $N_{\mathcal I}$ \\
                                                 \midrule
Continuous variables      & $2 N_{\mathcal I}$  & $2 N_{\mathcal I}$ & $4 N_{\mathcal I}$  \\
                                                 \midrule
Constraints      & $2 N_{\mathcal I} +2$ &  $2 N_{\mathcal I} +2$  &   $3 N_{\mathcal I} +2$  \\
\bottomrule
\end{tabular}
} \\
\vspace{2mm}
\textcolor{black}{$N_{\mathcal I}$ denotes the cardinality of set $\mathcal I$, i.e. $N_{\mathcal I} = card (N_{\mathcal I})$} 
\end{table}

\begin{figure}[t!]
  \centering

    \includegraphics[width=\linewidth]{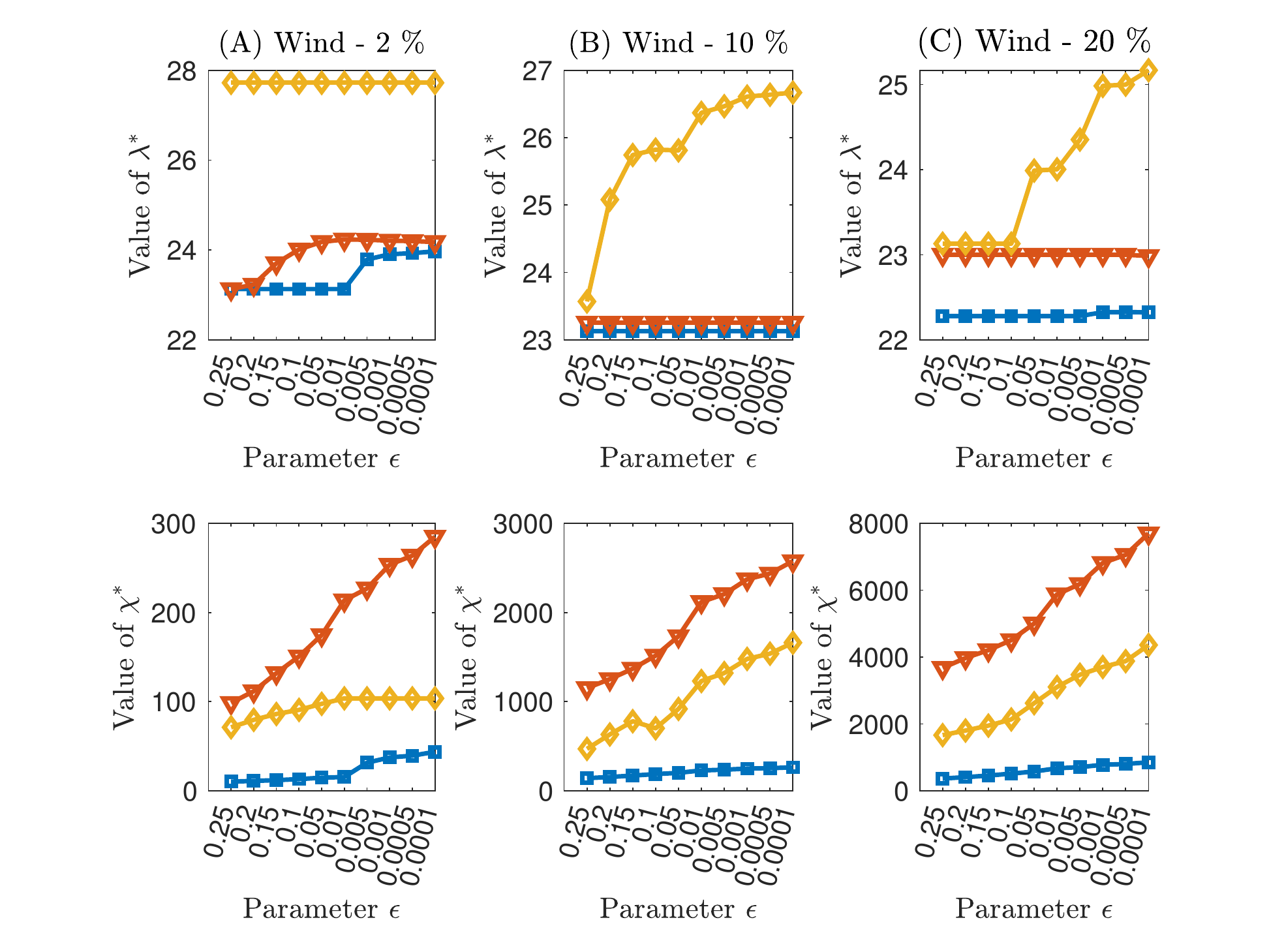}
    \includegraphics[width=0.4\linewidth]{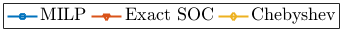}
       \caption{Energy ($\lambda^\ast$) and reserve ($\lambda^\ast$) prices for different values of $\epsilon$ under different market designs.}
    \label{fig:mprices}
    
\end{figure}

Figure~\ref{fig:mprices} compares the energy ($\lambda^\ast$) and reserve ($\chi^\ast$) prices obtained with the \textcolor{black}{MILP} model (Theorem~\ref{theorem1}), Chebyshev model (Theorem~\ref{theorem2}), exact MISOC reformulation (Theorem~\ref{theorem3})  for  different values of $\epsilon$. The effect of parameter $\epsilon$ on the resulting energy and reserve prices varies. Since the energy prices in all three models do not explicitly depend on $\epsilon$,  in some cases they remain constant for a wide range of values of $\epsilon$. Also, as the wind penetration rate increases, the energy prices tend to decrease for the same value of $\epsilon$, since more wind power generation replaces controllable generators with a relatively high production cost and the remaining controllable  generators are dispatched in an out-of-merit order.  On the other hand, as the value of $\epsilon$ reduces, reserve prices monotonically increase under all models and wind penetration rates, thus  reflecting a greater need in reserve to deal with the uncertainty and variability of wind power generation. In all simulations, the Chebyshev model, which is based on a conservative approximation of chance constraints, yields the greatest energy prices, regardless of the value of $\epsilon$ chosen.  On the other hand, the reserve prices under the Chebyshev approximation is lower than under the exact MISOC reformulation, since  the Chebyshev's conservative dispatch leads to  a greater out-of-merit order degree that results in a large amount of committed headroom capacity available  for providing reserves. 
\begin{figure}[t!]
  \centering

    \includegraphics[width=\linewidth]{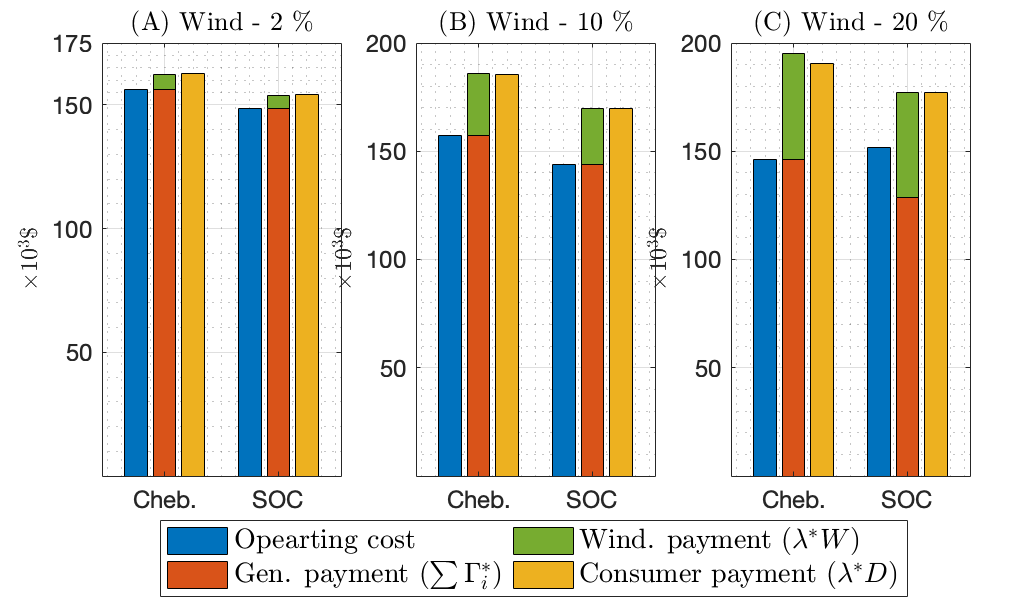}

       \caption{Comparison of the Chebyshev and MISOC models in terms of their market design features.}    
    \label{fig:revenue}
\end{figure}

\begin{figure}[b!]
  \centering

    \includegraphics[width=\linewidth]{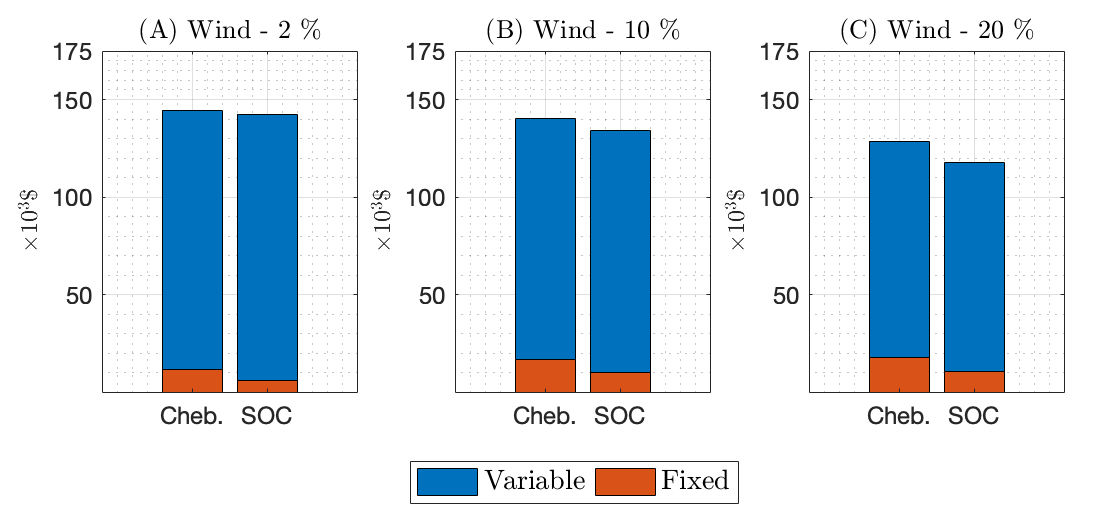}

       \caption{\textcolor{black}{Comparison of the Chebyshev and MISOC models in terms of their fixed and variable operating costs. In both models, the fixed operating cost is computed as $\sum_{i \in \mathcal I} C_{0,i} u_i$ and the variable operating cost is computed as the total operating cost minus the fixed operating cost.}}    
    \label{fig:cost}
\end{figure}

The trade-off between the energy and reserve prices obtained with the Chebyshev approximation and exact MISOC reformulation affects the revenue adequacy and cost recovery of these models.  Figure \ref{fig:revenue} compares the Chebyshev and MISOC models for $\epsilon=0.0001$, as it  is the most conservative solution among the results in Figure~\ref{fig:mprices} and, therefore, is expected to cause greatest out-of-merit order distortions. \textcolor{black}{Additionally, we} analyze the market performance of these models in terms of their total operating cost as defined by respective objective functions \textcolor{black}{and shown in  Figure \ref{fig:cost}}, total payment made by the market to controllable ($\sum_{i \in \mathcal I} \Gamma_i^\ast$ \textcolor{black}{as itemized in Figure \ref{fig:payment}}) and wind power ($\lambda^\ast W$) generators  and the total payment collected by the market from consumers ($\lambda^\ast D$).  Regardless of the wind penetration rate, the Chebyshev model yields a more expensive solution due to its inherent conservatism\textcolor{black}{, see Figure \ref{fig:cost}. This conservatism results in 311 MW of more committed power of conventional generators in the Chebyshev case relative to the MISOC case.}  Nevertheless, both the Chebyshev and exact MISOC models yield such prices that the total payment $\sum_{i \in \mathcal I} \Gamma^\ast_i$ is sufficient to cover the total operating cost, as well as we also   manually checked that $\Gamma_i^\ast \geq 0, \forall i \in \mathcal I$, i.e., every controllable generator attains a non-negative profit. In other words, the market outcomes in Figure \ref{fig:mprices} lead to cost recovery for all producers. 

As the wind penetration rate increases, we observe in Figure~\ref{fig:revenue} that the payment made by the market to wind power generators increases. Although the payment made to controllable generators reduces, the effect of zero-cost wind power generators suppress electricity prices at higher wind penetration rates (see Figure~\ref{fig:mprices}), which makes the market revenue-inadequate (e.g., the market deficit $\Delta^\ast >0$). \textcolor{black}{This revenue inadequacy is observed for the 20\% penetration levels and causes the relative mismatch between the total payment to producers  and the total payment from consumers equal to 0.2\% and 3.1\% for the MISOC and Chebyshev cases, respectively.}

\textcolor{black}{Furthermore, the effect of  greater wind penetration rates is observed in Figures \ref{fig:cost} and \ref{fig:payment}. Thus, greater wind penetrations tend to increase fixed costs in absolute values and relative  to the total operating cost for both the MISOC and Chebyshev market designs. However, the fixed costs of  the MISOC solution is systematically lower than in the Chebyshev case. Figure \ref{fig:payment} shows that the reserve and commitment payments increase for greater wind penetration rates. Notably, the MISOC market design consistently results in lower commitment payments than the Chebyshev case. }


\begin{figure}[t!]
  \centering

    \includegraphics[width=\linewidth]{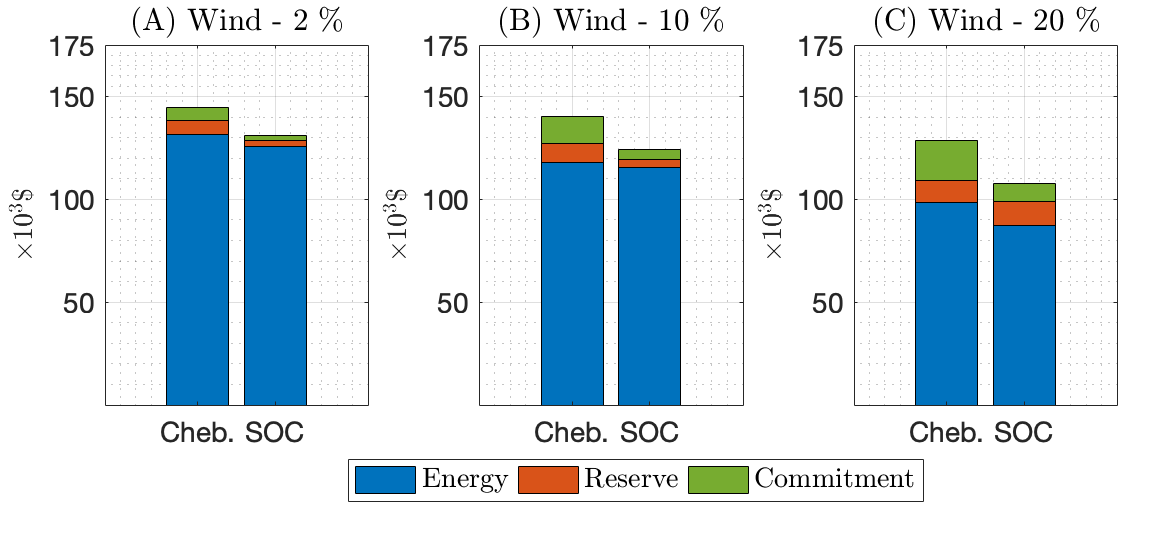}

       \caption{\textcolor{black}{Comparison of the Chebyshev and MISOC models in terms of the payment to conventional generators. The energy, reserve and commitment payments are defined as $ \sum_{i \in \mathcal I}  \lambda p_i $, $ \sum_{i \in \mathcal I} \chi \alpha_i$ and  $\sum_{i \in \mathcal I}  \gamma_i u_i$.}}    
    \label{fig:payment}
\end{figure}

\section{Conclusion}
This paper described an alternative approach to design a stochastic wholesale electricity market that allows one to internalize uncertainty of renewable generation resources and risk tolerance of the market operator in the price formation process using the chance constraints. The resulting stochastic market design exploits SOC duality to obtain a robust competitive  equilibrium that has the cost recovery and revenue adequacy properties similar to existing deterministic markets. \textcolor{black}{In the future, our work will focus on the application of the proposed pricing theory to multi-period network- and security-constrained stochastic market designs, which are needed by current market practices, and on achieving revenue adequacy of the stochastic market (e.g. by means of using alternative auction schemes \cite{8651383})}.

\begin{appendices} \section{Equivalence of \eqref{eq:ccuc_chebyshev_socp}  and \eqref{eq:ep_gen}}
\label{app:KKT_conditions}

We prove that \eqref{eq:ccuc_chebyshev_socp}  and \eqref{eq:ep_gen} yield equivalent solutions.  Consider \eqref{eq:ccuc_chebyshev_socp}  and recall that $x_i = p_i^2$ and $z_i = \alpha_i^2$. The  Lagrange function of \eqref{eq:ccuc_chebyshev_socp} is  then given by:
\allowdisplaybreaks
\begin{subequations}
\begin{flalign}
 \nonumber  \mathcal  L  = & \sum_{i \in \mathcal I}\bigg[C_{0,i} u_{i} +  C_{1,i} p_i  +   C_{2,i} (p^2_i + \sigma^2\alpha^2_i  ) + 
 \gamma_i (u_i^\ast - u_i) \nonumber \\ &+ \overline{\mu}_i  (p_i - \overline{P}_i  u_i+ \tilde \sigma_i \alpha_i)     + \underline{\mu}_i (- p_i + \underline{P}_i  u_i +   \tilde \sigma_i \alpha_i) \bigg] \nonumber \\ & + \lambda \big( D-W - \sum_{i \in \mathcal{I}} p_i\big) + \chi \big( 1-\sum_{i \in \mathcal{I}} \alpha_i \big) \label{eq:lagrangian}
\end{flalign}
\end{subequations}
\allowdisplaybreaks[0]
Using \eqref{eq:lagrangian}, we obtain the KKT conditions of  \eqref{eq:ccuc_chebyshev_socp} :
\allowdisplaybreaks
\begin{subequations}
\begin{flalign}
& \frac{\partial \mathcal L}{\partial p_i} = C_{1,i} + 2 C_{2,i} p_i + \overline{\mu}_i - \underline{\mu}_i  - \lambda = 0 \label{eq:kkt_1} \\
&\frac{\partial \mathcal L}{\partial \alpha_i} = 2 C_{2,i}  \sigma^2 \alpha_i + \overline{\mu}_i \tilde \sigma_i + \underline{\mu}_i \tilde \sigma_i  - \chi = 0 \label{eq:kkt_2}  \\
&\frac{\partial \mathcal L}{\partial u_i} = C_{0,i} - \gamma_i -\overline{\mu}_i \overline{P}_i  +\underline{\mu}_i \underline{P}_i = 0  \label{eq:kkt_3}  \\
& \sum_{i \in \mathcal I} p_i  = D-W \label{eq:kkt_4}  \\
& \sum_{i \in \mathcal I} \alpha_i = 1 \label{eq:kkt_5} \\
& u_i = u_i^\ast  \label{eq:kkt_6}  \\
& 0 \leq \overline{\mu}_i \perp ( \overline{P}_i  u_i- \tilde \sigma \alpha_i - p_i) \geq 0 \label{eq:kkt_7}  \\    
& 0 \leq \underline{\mu}_i \perp ( p_i - \underline{P}_i  u_i- \tilde \sigma \alpha_i) \geq 0, \label{eq:kkt_8} 
\end{flalign}
\end{subequations}
\allowdisplaybreaks[0]
where \eqref{eq:kkt_1}-\eqref{eq:kkt_2} are the stationary conditions, \eqref{eq:kkt_4}-\eqref{eq:kkt_6} are the primal feasibility conditions, and \eqref{eq:kkt_7}-\eqref{eq:kkt_8}  are the complementary slackness conditions. Note that \eqref{eq:kkt_1}-\eqref{eq:kkt_6} match the KKT conditions of \eqref{eq:ep_gen1}-\eqref{eq:ep_gen6} and that \eqref{eq:kkt_7}-\eqref{eq:kkt_8} match the KKT conditions of  \eqref{eq:ep_gen7}-\eqref{eq:ep_gen8}. Hence,  \eqref{eq:ccuc_chebyshev_socp}  and \eqref{eq:ep_gen} are characterized by the same set of KKT conditions and, thus, yield the equivalent solutions,  \cite{8246578,918286}. $\square$

\end{appendices}

    \bibliographystyle{IEEEtran}
    
    \bibliography{main}

\end{document}